\DeclareMathOperator{\ME}{{\sf E}}
\DeclareMathOperator{\Prob}{{\sf P}}
\DeclareMathOperator{\rank}{{rank}}
\DeclareMathOperator{\supp}{{supp}}
\DeclareMathOperator{\betatolines}{{lob}}
\DeclareMathOperator{\argmin}{{arg\/min}}
\DeclareMathOperator{\argmax}{{arg\/max}}
\DeclareMathOperator{\ellell}{{\ell\ell}}
\DeclareMathOperator{\hatellell}{{\widehat{\ell\ell}}}
\DeclareMathOperator{\sigmatwo}{{sigma2}}
\DeclareMathOperator{\vectorize}{{vec}}
\newcommand{\mathbi}[1]{\mbox{\boldmath$#1$}}
\newcommand{\IgnoreF}{Ignore-$\widehat{F}$}
\def\eatdot.{}
\newcommand{\boldbeta}{\mbox{\boldmath$\beta$}}
\newcommand{\hatboldbeta}{\hat{\boldsymbol\beta}}
\newcommand{\boldmu}{\mbox{\boldmath$\mu$}}
\newcommand{\Psiinf}{\overline{\boldsymbol\varPsi}_\infty}
\newcommand{\dPsiinf}{\overline{\boldsymbol\varPsi}{}'_\infty}
\newcommand{\ddPsi}{{\boldsymbol\varPsi}{}''_1}
\newcommand{\betanormaltrue}{\boldbeta_{{\rm tn}}}
\newcommand{\distrto}{\stackrel{{\sf d}}{\longrightarrow}}
\newcommand{\xiold}{{\xi\,{\rm old}}}
\theoremstyle{plain}% Theorems
\newtheorem{thm}{Theorem}
\newtheorem{prop}[thm]{Proposition}
\theoremstyle{remark}% Remark
\newtheorem{remark}{Remark}
\newcommand{\rrVert}{\Vert}
\newcommand{\llVert}{\Vert}
\newcommand{\rrvert}{\vert}
\begin{document}
\begin{frontmatter}

\title{Equivariant adjusted least squares estimator in two-line fitting model}

\author{\inits{S.}\fnm{Sergiy}\snm{Shklyar}}\email{shklyar@univ.kiev.ua}
\address{Taras Shevchenko National University of Kyiv, Ukraine}

\markboth{S. Shklyar}{ALS estimator in \xch{two-line}{two lines} fitting model}

\begin{abstract}
We consider the two-line fitting problem.
True points lie on two straight lines and are observed with Gaussian perturbations.
For each observed point, it is not known on which line the corresponding true point lies.
The parameters of the lines are estimated.

This model is a restriction of the conic section fitting model
because a couple of two lines is a degenerate conic section.
The following estimators are constructed:
two projections of the adjusted least squares estimator in the conic section fitting model,
orthogonal regression estimator,
parametric maximum likelihood  estimator in the Gaussian model,
and regular best asymptotically normal moment estimator.

The conditions for the consistency and asymptotic normality of the projections
of the adjusted least squares estimator are provided.
All the estimators constructed in the paper are equivariant.
The estimators are compared numerically.
\end{abstract}

\begin{keyword}
Conic section fitting\sep
curve fitting\sep
subspace clustering
\MSC[2010] 62J05\sep 62H12\sep 62H30
\end{keyword}

\received{30 January 2016}% Updated by VTEXPTS2LaTeX.exe, 25.02.2016 10:41
\revised{19 February 2016}% Updated by VTEXPTS2LaTeX.exe, 25.02.2016 10:41
\accepted{19 February 2016}% Updated by VTEXPTS2LaTeX.exe, 25.02.2016 10:41
\publishedonline{21 March 2016}
\end{frontmatter}

\section{Introduction}\label{sec-intro}
\subsection{Two-line fitting model}\label{ss:model2lines}
Consider a problem of estimation of two lines
by perturbed observations of points that lie on the lines.
Let the true points $(\xi_i, \eta_i)$
lie on the union of two different lines $\eta = k_1 \xi + h_1$
and $\eta = k_2 \xi + h_2$, that is,
\begin{equation}
\biggl[\begin{array}{@{}l@{\quad}l}
\mbox{either} & \eta_i = k_1 \xi_i + h_1, \\
\mbox{or}  & \eta_i = k_2 \xi_i + h_2,
\end{array} %]
\quad i=1,2,\ldots\, . \label{q11a}
\end{equation}
Let these points be observed with perturbations $(\delta_i, \varepsilon_i)$,
$i=1,\ldots,n$, that~is, the observed points are $(x_i, y_i)$, $i=1,\ldots,n$,
with
\begin{gather}
x_i = \xi_i + \delta_i, \label{q11b1}
\\
y_i = \eta_i + \varepsilon_i.
\label{q11b2}
\end{gather}
The perturbations are assumed to be independent and identically
normally distributed,
\begin{equation}
\begin{pmatrix} \delta_i \\ \varepsilon_i \end{pmatrix} \sim N\bigl(0, \sigma^2 {\bf I}\bigr), \label{q11c}
\end{equation}
where $\bf I$ is the $2\times 2$ identity matrix.

The parameters $k_1$, $h_1$, $k_2$, $h_2$, and $\sigma^2$
are to be estimated.

We consider both functional and structural models.
In \emph{functional\/} model, the true points are assumed to be nonrandom.
In \emph{structural\/} model, the true points are assumed
to be independent and identically distributed (i.i.d.).
The errors $(\delta_i, \varepsilon_i)$ are i.i.d.\@
and independent of the true points.

In the structural model, $(\xi_i, \eta_i, \delta_i, \varepsilon_i)$
are i.i.d.\@ random vectors, and thus, the observed points $(x_i, y_i)$
are i.i.d.
In the functional model, the observed points are independent, Gaussian,
with different means but with common covariance matrix.

\begin{remark}
The true lines defined by Eqs.~\eqref{q11a} cannot
be parallel to the \hbox{$y$-axis}.
In order to avoid overflows during evaluation of the
estimators (except of RBAN-moment estimator),
another parameterization is used internally:
$\vec{\tau}^\top (\mathbi{\zeta} - \mathbi{\zeta}_0) = 0$,
where $\vec{\tau}$ is a unit vector orthogonal to the
line, and $\mathbi{\zeta}_0$ is a point on the line.
The computation of the RBAN-moment estimator
(see Section~\ref{ss:RBANmome}) is implemented for
explicit parameterization only.
Computational optimization of the RBAN-moment estimator
is a matter of further work.

The explicit parameterization has the advantage that the
number of parameters is equal to the dimension of parameter
space.
(In \cite{Shklyar2015p1}, the second-order equation
\eqref{q124ai} has six unknown coefficients, but the conic
section can be parameterized with five parameters.
The parameter space for the parameters of the conic section
was the five-dimensional unit sphere
in the six-dimensional Euclidean space.
Mismatch between the number of parameters
and the dimension of the parameter space
made the asymptotic covariance matrix of the estimator
\emph{singular}.)

In simulations, the confidence intervals for the coordinates
of the intersection point of the two lines are obtained
based on the asymptotic covariance matrix for the
intersection point.
For the projections the ALS2 estimator, that asymptotic
covariance matrix can be evaluated without use of explicit
line parameterization.
\end{remark}

\subsection{Conic section fitting model}
Let the true points $(\xi_i, \eta_i)$
lie on the second-order algebraic curve
\begin{equation}
A \xi_i^2 + 2 B \xi_i \eta_i
+ C \eta_i^2 + 2 D \xi_i + 2 E
\eta_i + F = 0, \quad i=1,2,\ldots. \label{q124ai}
\end{equation}
Hereafter, a second-order algebraic curve is called
a ``conic section'' or a ``conic.''

The points are observed with Gaussian perturbations,
and the perturbed points are denoted as $(x_i, y_i)$.
We have the same equations
\[
\begin{array}{l}
x_i = \xi_i  + \delta_i,      \\
y_i = \eta_i + \varepsilon_i,
\end{array} \quad \begin{pmatrix} \delta_i \\ \varepsilon_i \end{pmatrix} \sim N\bigl(0, \sigma^2 {\bf I}\bigr),
\]
\xch{as}{as Eqs.} \eqref{q11b1}--\eqref{q11c} in the two-line fitting model.

The vector of coefficients \xch{in}{in Eq.} \eqref{q124ai} is denoted by
$\boldbeta = (A, 2 B, C,\allowbreak 2D,\allowbreak 2E,\allowbreak F)^\top$.
The nonzero vector $\boldbeta$ and the error variance $\sigma^2$ are
the parameters of interest.

Similarly to the two-line fitting model,
the \emph{functional\/} and the \emph{structural\/}
models are distinguished.

A couple of lines is a degenerate case of a conic section.
Therefore, the conic section fitting model is an extension
of the two-line fitting model.

\subsection{ALS2 estimator in conic section fitting model}

We consider the adjusted least squares (ALS) estimator for
unknown $\sigma^2$.
The estimator is constructed in \cite{KukushMVH}.
Introduce the $6\times 6$ symmetric matrix
\begin{align*}
&\psi(x,y; v)
\\
& \newcommand{\dexxxx} {x^4{-}6x^2v{+}3v^2}
\newcommand{\dexxx} {x^3 - 3xv} \newcommand{\dexx} {x^2-v}
\newcommand{\deyyyy} {y^4{-}6y^2v{+}3v^2}
\newcommand{\deyyy} {y^3 - 3yv} \newcommand{\deyy} {y^2-v}
\newcommand{\dexxyy} {\bigl(x^2{-}v\bigr) \bigl(y^2{-}v
\bigr)} \quad {=}\begin{pmatrix}
\dexxxx        & (\dexxx)y   & (x^2{-}v)(y^2{-}v)  & *     & *     & * \\
(\dexxx)y      & (x^2{-}v)(y^2{-}v)& x\,(\deyyy) & *     & *     & * \\
(x^2{-}v)(y^2{-}v) & x\,(\deyyy) & \deyyyy     & *     & *     & * \\
\dexxx         & (\dexx)y    & x\,(\deyy)  & \dexx & xy    & x \\
(\dexx)y       & x\,(\deyy)  & \deyyy      & xy    & \deyy & y \\
\dexx          & xy          & \deyy       & x     & y     & 1
\end{pmatrix} .
\end{align*}
Asterisks are typed instead of some entries above the diagonal
of a symmetric matrix.
The entries of the matrix $\psi(x,y;v)$
are generalized Hermite polynomials in $x$ and $y$.
The matrix $\psi(x,y;v)$ is constructed such that
$\ME\psi(x_i,y_i;\sigma^2) = \psi(\xi_i,\eta_i;0)$
in the functional model and $\psi(\xi_i,\eta_i;0) \boldbeta = 0$
for the true points and true parameters.

Denote
\[
{\boldsymbol \varPsi}_n(v) = \sum_{i=1}^n
\psi(x_i, y_i; v) .
\]

The estimator $\hat\sigma^2$ of the error variance $\sigma^2$
is obtained from the equation
\begin{equation}
\label{q124g2} \lambda_{\min}\bigl({\boldsymbol \varPsi}_n\bigl(\hat
\sigma^2\bigr)\bigr) = 0 .
\end{equation}

Equation~\eqref{q124g2} always has a unique nonnegative solution.
If $n\ge 6$, then the solution to \eqref{q124g2} is positive almost surely.

The matrix ${\boldsymbol \varPsi}_n(\hat\sigma^2)$ is singular.
Define the estimator $\hatboldbeta$ of the vector $\boldbeta$
as a nonzero solution to the equation
\[
{\boldsymbol \varPsi}_n\bigl(\hat\sigma^2\bigr) \hatboldbeta = 0
.
\]

The strong consistency of the ALS2 estimator is proved in \cite{KukushMVH}
and  \cite{Shklyar2007} under somewhat different conditions.
The asymptotic normality is proved in \cite{Shklyar2015p1} for
the functional model and in \cite{Shklyar2015p2} for
the structural model.
Two consistent estimators of the asymptotic covariance matrix
are constructed in \cite{Shklyar2015p2}.

Denote
\begin{align*}
&\psi'_v(x,y;v) = \frac{\partial}{\partial v} \psi(x, y; v), \hspace{53pt} \boldsymbol\varPsi_1'' = \frac{\partial^2}{\partial v^2} \psi(x, y; v),\hspace{14pt}\\
&\hspace{173pt} \overline{\boldsymbol\varPsi}{}'_n(v)  = \frac{{\rm d}}{{\rm d}v} {\boldsymbol\varPsi}_n(v), \\
&\overline{\boldsymbol\varPsi}{}_n = \sum_{i=1}^n \psi(\xi_i,\eta_i; 0), \qquad \hspace{59pt} \overline{\boldsymbol\varPsi}{}'_n = \sum_{i=1}^n\psi'_v(\xi_i,\eta_i; 0),\\
&\overline{\boldsymbol\varPsi}{}_{\infty} = \lim_{n\to\infty}\frac{1}n \overline{\boldsymbol\varPsi}{}_n = \lim_{n\to\infty}\frac{1}n {\boldsymbol\varPsi}{}_n\bigl(\sigma^2\bigr), \qquad \overline{\boldsymbol\varPsi}{}_{\infty}' = \lim_{n\to\infty}\frac{1}n \overline{\boldsymbol\varPsi}{}'_n = \lim_{n\to\infty}\frac{1}n {\boldsymbol\varPsi}{}'_n \bigl(\sigma^2\bigr).
\end{align*}
Under the conditions of Proposition~\ref{prop:corec1} stated further, the latter limits exist
almost surely. See \cite{Shklyar2015p1} for explicit expressions of
the matrices $\psi'_v(x,y;v)$, ${\boldsymbol\varPsi}{}_1''$, $\overline{\boldsymbol\varPsi}{}_{\infty}$, and $\overline{\boldsymbol\varPsi}{}_{\infty}'$.
Note that $\ddPsi$ is a constant matrix.

\begin{prop}
\label{prop:corec1}
In the functional model, for all integer $p\ge 0$ and $q\ge 0$
such that $p+q\le 4$, let the following limits exist and be finite:
\[
\lim_{n\to\infty} \frac{1}{n} \sum
_{i=1}^n \xi_i^p
\eta_j^q =: \mu_{p,q},
\]
whereas in the structural model,
let $\ME \xi_1^4 < \infty$ and $\ME \eta_1^4 < \infty$.
In both models, let $\rank \Psiinf = 5$.
Then\textup{:}
\begin{enumerate}[\rm1.]
\item The estimator $\hatboldbeta$ is strongly consistent
in the following sense:
\begin{gather}
\label{eq:consHatbeta1} \min \biggl( \biggl\llVert \frac{\hatboldbeta}{\|\hatboldbeta\|} -
\frac{\boldbeta}{\|\boldbeta\|}\biggr\rrVert , \: \biggl\llVert \frac{\hatboldbeta}{\|\hatboldbeta\|} +
\frac{\boldbeta}{\|\boldbeta\|}\biggr\rrVert \biggr) \to 0 \quad \mbox{a.s.,}
\\
\label{eq:consHatsigma2} \hat\sigma^2 \to \sigma^2 \quad
\mbox{a.s.}
\end{gather}
\item  $\boldbeta{}^\top \dPsiinf \boldbeta < 0$.
\item  Eventually, $\hatboldbeta{}^\top {\boldsymbol\varPsi}{}'_n(\hat\sigma^2) \hatboldbeta < 0$.
\end{enumerate}
\end{prop}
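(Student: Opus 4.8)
The plan is to take the three assertions in turn; the second carries the real content, the first reduces to the literature, and the third is a short limiting argument resting on the other two. For assertion~1 I would simply invoke the strong consistency theorems of \cite{KukushMVH} and \cite{Shklyar2007}: the moment hypotheses imposed here (finiteness of the $\mu_{p,q}$ with $p+q\le4$ in the functional model, of $\ME\xi_1^4$ and $\ME\eta_1^4$ in the structural model) together with $\rank\Psiinf=5$ match, or imply, the hypotheses used there, so \eqref{eq:consHatbeta1} and \eqref{eq:consHatsigma2} follow after a routine check that the conditions correspond.

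For assertion~2, write $\mathbi{m}=(x^2,xy,y^2,x,y,1)^\top$, let $\mathbi{m}_i$ be its value at $(\xi_i,\eta_i)$, and set $P(x,y)=Ax^2+2Bxy+Cy^2+2Dx+2Ey+F=\boldbeta^\top\mathbi{m}$, so that $\psi(x,y;0)=\mathbi{m}\mathbi{m}^\top$ and $P(\xi_i,\eta_i)=\boldbeta^\top\mathbi{m}_i=0$. The deconvolution structure of $\psi$ gives, for $(\delta,\varepsilon)\sim N(0,sI)$ with arbitrary $s>0$, the identity $\ME\,\psi(\xi_i+\delta,\eta_i+\varepsilon;0)=\psi(\xi_i,\eta_i;-s)$; sandwiching by $\boldbeta$ turns its left side into $\ME\,P(\xi_i+\delta,\eta_i+\varepsilon)^2$, and Taylor-expanding $P$ at $(\xi_i,\eta_i)$ (using $P(\xi_i,\eta_i)=0$ and the vanishing of odd Gaussian moments) shows this equals $s\,\lVert\nabla P(\xi_i,\eta_i)\rVert^2+O(s^2)$. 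Since $v\mapsto\boldbeta^\top\psi(\xi_i,\eta_i;v)\boldbeta$ is a polynomial of degree at most $2$ (because $\ddPsi$ is constant) vanishing at $v=0$, matching the coefficient of $s$ gives $\boldbeta^\top\psi'_v(\xi_i,\eta_i;0)\boldbeta=-\lVert\nabla P(\xi_i,\eta_i)\rVert^2\le0$. Averaging over $i$ and letting $n\to\infty$ (in the structural model taking $\ME$), and writing $P_x=\mathbi{c}_1^\top\mathbi{m}$, $P_y=\mathbi{c}_2^\top\mathbi{m}$ with $\mathbi{c}_1=(0,0,0,2A,2B,2D)^\top$ and $\mathbi{c}_2=(0,0,0,2B,2C,2E)^\top$, together with $\tfrac1n\sum_i\mathbi{m}_i\mathbi{m}_i^\top\to\Psiinf$, I obtain $\boldbeta^\top\dPsiinf\boldbeta=-(\mathbi{c}_1^\top\Psiinf\mathbi{c}_1+\mathbi{c}_2^\top\Psiinf\mathbi{c}_2)\le0$.

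The strict inequality is the main obstacle, and this is where $\rank\Psiinf=5$ enters. Since $\Psiinf$ is positive semidefinite, equality above would force $\mathbi{c}_1,\mathbi{c}_2\in\ker\Psiinf$; because $\Psiinf\boldbeta=0$, $\boldbeta\ne0$ and $\rank\Psiinf=5$, we have $\ker\Psiinf=\operatorname{span}\{\boldbeta\}$, so $\mathbi{c}_j=\lambda_j\boldbeta$. Comparing coordinates --- the first three entries of $\mathbi{c}_1$ and of $\mathbi{c}_2$ are zero --- shows that $\lambda_j\ne0$ would make $\boldbeta=0$; hence $\mathbi{c}_1=\mathbi{c}_2=0$, i.e. $A=B=C=D=E=0$, and then $\boldbeta^\top\mathbi{m}_i=F\ne0$ contradicts $\boldbeta^\top\mathbi{m}_i=0$. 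Therefore $\boldbeta^\top\dPsiinf\boldbeta<0$.

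For assertion~3, constancy of $\ddPsi$ gives $\tfrac1n{\boldsymbol\varPsi}'_n(v)=\tfrac1n{\boldsymbol\varPsi}'_n(\sigma^2)+(v-\sigma^2)\ddPsi$; evaluating at $v=\hat\sigma^2$ and using $\hat\sigma^2\to\sigma^2$ (assertion~1) together with the a.s.\ limit $\tfrac1n{\boldsymbol\varPsi}'_n(\sigma^2)\to\dPsiinf$ yields $\tfrac1n{\boldsymbol\varPsi}'_n(\hat\sigma^2)\to\dPsiinf$ a.s. The sign of $\hatboldbeta^\top{\boldsymbol\varPsi}'_n(\hat\sigma^2)\hatboldbeta$ is unchanged by rescaling $\hatboldbeta$, so take $\lVert\hatboldbeta\rVert=1$; then \eqref{eq:consHatbeta1} gives $\hatboldbeta\hatboldbeta^\top\to\boldbeta\boldbeta^\top/\lVert\boldbeta\rVert^2$ a.s., so $\tfrac1n\hatboldbeta^\top{\boldsymbol\varPsi}'_n(\hat\sigma^2)\hatboldbeta=\operatorname{tr}\!\bigl(\tfrac1n{\boldsymbol\varPsi}'_n(\hat\sigma^2)\,\hatboldbeta\hatboldbeta^\top\bigr)\to\boldbeta^\top\dPsiinf\boldbeta/\lVert\boldbeta\rVert^2<0$ a.s.\ by assertion~2, whence $\hatboldbeta^\top{\boldsymbol\varPsi}'_n(\hat\sigma^2)\hatboldbeta<0$ for all large $n$.
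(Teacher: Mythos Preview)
Your argument is correct and, for assertions~1 and~3, essentially coincides with the paper's: it too cites the consistency theorems of \cite{KukushMVH,Shklyar2007} for part~1 and derives part~3 from the a.s.\ convergence
\[
\frac{\hatboldbeta^\top {\boldsymbol\varPsi}'_n(\hat\sigma^2)\hatboldbeta}{n\|\hatboldbeta\|^2}
=\frac{\hatboldbeta^\top {\boldsymbol\varPsi}'_n(\sigma^2)\hatboldbeta}{n\|\hatboldbeta\|^2}
+\frac{(\hat\sigma^2-\sigma^2)\,\hatboldbeta^\top\ddPsi\hatboldbeta}{\|\hatboldbeta\|^2}
\longrightarrow \frac{\boldbeta^\top\dPsiinf\boldbeta}{\|\boldbeta\|^2},
\]
which is exactly your trace computation in different clothing.

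The genuine difference is in assertion~2. The paper simply invokes Lemma~5 of \cite{Shklyar2007} (giving $\limsup\tfrac1n\boldbeta^\top\overline{\boldsymbol\varPsi}'_n\boldbeta<0$) and refers to the proof of Theorem~2 in \cite{Shklyar2015p1} for the passage to $\dPsiinf$. You instead give a self-contained argument: the deconvolution identity $\ME\,\psi(\xi_i+\delta,\eta_i+\varepsilon;0)=\psi(\xi_i,\eta_i;-s)$ combined with a Taylor expansion yields the explicit formula $\boldbeta^\top\psi'_v(\xi_i,\eta_i;0)\boldbeta=-\|\nabla P(\xi_i,\eta_i)\|^2$, and then you express the limiting average as $-(\mathbi{c}_1^\top\Psiinf\mathbi{c}_1+\mathbi{c}_2^\top\Psiinf\mathbi{c}_2)$ and use $\ker\Psiinf=\operatorname{span}\{\boldbeta\}$ to rule out equality. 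This is more informative than the paper's citation: it makes transparent \emph{why} the quadratic form is negative (it is minus an averaged squared gradient) and isolates precisely where the rank hypothesis enters (to exclude $\nabla P\equiv0$ on the support). The paper's route is shorter on the page but outsources the content; yours is longer but stands on its own. One small remark: your phrase ``comparing the first three entries \dots\ shows $\lambda_j\ne0$ would make $\boldbeta=0$'' actually requires looking at all six coordinates (the first three kill $A,B,C$, and then the remaining three force $D=E=F=0$); the conclusion is correct, but spell this out in a final version.
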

``Eventually'' in the previous statement means that almost surely there exists $n_0$
such that  $\hatboldbeta{}^\top {\boldsymbol\varPsi}'_n(\hat\sigma^2) \hatboldbeta < 0$
for all $n \ge n_0$. In other words, almost surely,
$\hatboldbeta{}^\top {\boldsymbol\varPsi}'_n(\hat\sigma^2) \hatboldbeta \ge 0$
holds  only for  finitely many $n$.

Denote the normalized version of the true parameter
\[
\betanormaltrue = \sqrt{\displaystyle\frac{-1}{\boldbeta^\top \dPsiinf \boldbeta}} \boldbeta .
\]

Normalize the estimator of $\boldbeta$ in such a way that
$\widetilde{\mathbi{\beta}}{}^\top {\boldsymbol\varPsi}'_n(\hat\sigma^2)
 \widetilde{\mathbi{\beta}} = -n$
and $\boldbeta^\top \widetilde{\mathbi{\beta}} \ge 0$.
Therefore, denote
\begin{equation}
\label{eq:defbetatilde} \widetilde{\mathbi{\beta}} =
\begin{cases}
  \sqrt{\tfrac{ -n}
{\hatboldbeta{}^\top {\boldsymbol\varPsi}'_n(\hat\sigma^2) \hatboldbeta}} \,\hatboldbeta
& \mbox{if $\boldbeta{}^\top \hatboldbeta \ge 0$}, \\[6pt]
  -\sqrt{\tfrac{ -n}
{\hatboldbeta{}^\top {\boldsymbol\varPsi}'_n(\hat\sigma^2) \hatboldbeta}} \,\hatboldbeta
& \mbox{if $\boldbeta{}^\top \hatboldbeta < 0$}.
\end{cases}
\end{equation}

\begin{prop}
\label{prop:propALS2}
\textup{1.} Under the conditions of Proposition~\ref{prop:corec1},
the estimator $\widetilde{\mathbi{\beta}}$ is a strongly consistent estimator of
$\betanormaltrue = (-{\boldbeta^\top \dPsiinf \boldbeta})^{-1/2} \boldbeta$,
that is, $\widetilde{\mathbi{\beta}} \to \betanormaltrue$~a.s.

\textup{2.} In the functional model, for all integer $p\ge 0$ and $q\ge 0$
such that $p+q\le 6$, let the following limits exist and be finite:
\[
\lim_{n\to\infty} \frac{1}{n} \sum
_{i=1}^n \xi_i^p
\eta_j^q =: \mu_{p,q},
\]
whereas in the structural model,
let $\ME \xi_1^6 < \infty$ and $\ME \eta_1^6 < \infty$.
In both models, let $\rank \Psiinf = 5$.
Then the estimator  $\hat{\mathbi{\theta}}=(\widetilde{\mathbi{\beta}}{}^\top\!, \hat\sigma^2)^\top$
is asymptotically normal in the following sense:
\begin{equation}
\sqrt{n} \begin{pmatrix}
\widetilde{\mathbi{\beta}} - \betanormaltrue \\
\hat\sigma^2 - \sigma^2
\end{pmatrix} \distrto N(0, \, \varSigma_{\hat\theta}),
\end{equation}
where
\begin{align*}
&\varSigma_{\hat\theta} = \begin{pmatrix}
\Psiinf & \dPsiinf \betanormaltrue \\
\betanormaltrue^\top \dPsiinf &
  \frac{1}2 \betanormaltrue^\top \ddPsi \betanormaltrue
\end{pmatrix}^{\!\!{-}1} {\bf B} \begin{pmatrix}
\Psiinf & \dPsiinf \betanormaltrue \\
\betanormaltrue^\top \dPsiinf &
  \frac{1}2 \betanormaltrue^\top \ddPsi \betanormaltrue
\end{pmatrix}^{\!\!{-}1},
\\
&\quad {\bf B} = \lim_{n\to\infty} \frac{1}n \sum
_{i=1}^n \ME \mathbi{s}_i^{}
\mathbi{s}_i^\top,
\\
&\quad \mathbi{s}_i = \begin{pmatrix}
\psi(x_i, y_i; \sigma^2) \betanormaltrue \\
\frac{1}2 \betanormaltrue^\top
\psi(x_i, y_i; \sigma^2) \betanormaltrue - \frac{1}2
\end{pmatrix} .
\end{align*}

\textup{3.} Under the conditions of part \textup{2} of Proposition~\ref{prop:propALS2},
the following estimator of the asymptotic covariance matrix is consistent:
\begin{align*}
&\widehat{\varSigma}_{\hat\theta}(n) = {\bf A}^{-1}(n) {\bf B}(n) {
\bf A}^{-1}(n),
\\
&\quad {\bf A}(n) = \begin{pmatrix}
\frac{1}n {\boldsymbol\varPsi}_n(\hat\sigma^2) &
\frac{1}n {\boldsymbol\varPsi}'_n(\hat\sigma^2) \widetilde{\mathbi{\beta}} \\
\frac{1}n \widetilde{\boldbeta}^\top {\boldsymbol\varPsi}'_n(\hat\sigma^2) &
\frac{1}2 \widetilde{\boldbeta}^\top \ddPsi \widetilde{\boldbeta}
\end{pmatrix}, \quad {\bf B}(n) = \frac{1}n \sum
_{i=1}^n \hat{\mathbi{s}}_i^{}
\hat{\mathbi{s}}_i^\top,
\\
&\quad \hat{\mathbi{s}}_i = \begin{pmatrix}
\psi(x_i, y_i; \hat\sigma^2) \tilde{\mathbi{\beta}} \\
\frac{1}2 \tilde{\mathbi{\beta}}{}^\top
\psi(x_i, y_i; \hat\sigma^2) \tilde{\mathbi{\beta}} - \frac{1}2
\end{pmatrix} ,
\end{align*}
that is,
\[
\widehat{\varSigma}_{\hat\theta}(n) \to {\varSigma}_{\hat\theta}
\]
in probability.
\end{prop}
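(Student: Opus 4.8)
The plan is to read Part~1 off Proposition~\ref{prop:corec1}, to realise $\hat{\mathbi{\theta}}$ as a $Z$-estimator and run the usual sandwich/$\delta$-method argument for Part~2, and to deduce Part~3 from plug-in consistency together with continuity of matrix inversion.

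\emph{Part 1.} The only ingredient beyond Proposition~\ref{prop:corec1} is $\frac1n{\boldsymbol\varPsi}'_n(\hat\sigma^2)\to\dPsiinf$ a.s. Since $v\mapsto{\boldsymbol\varPsi}'_n(v)$ is affine with constant slope $n\,\ddPsi$, we have $\frac1n{\boldsymbol\varPsi}'_n(\hat\sigma^2)=\frac1n\sum_{i=1}^n\psi'_v(x_i,y_i;\sigma^2)+(\hat\sigma^2-\sigma^2)\ddPsi$, where the first term tends to $\dPsiinf$ by the law of large numbers and $\hat\sigma^2\to\sigma^2$ by~\eqref{eq:consHatsigma2}. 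By part~3 of Proposition~\ref{prop:corec1}, $\hatboldbeta{}^\top{\boldsymbol\varPsi}'_n(\hat\sigma^2)\hatboldbeta<0$ eventually, so $\widetilde{\mathbi{\beta}}$ in~\eqref{eq:defbetatilde} is eventually well defined and
\[
\|\widetilde{\mathbi{\beta}}\|^2=\frac{-n\,\|\hatboldbeta\|^2}{\hatboldbeta{}^\top{\boldsymbol\varPsi}'_n(\hat\sigma^2)\hatboldbeta}
=\frac{-1}{(\hatboldbeta/\|\hatboldbeta\|){}^\top\bigl(\tfrac1n{\boldsymbol\varPsi}'_n(\hat\sigma^2)\bigr)(\hatboldbeta/\|\hatboldbeta\|)}
\;\longrightarrow\;\frac{-\|\boldbeta\|^2}{\boldbeta{}^\top\dPsiinf\boldbeta}=\|\betanormaltrue\|^2\quad\text{a.s.,}
\]
using~\eqref{eq:consHatbeta1} and $\boldbeta{}^\top\dPsiinf\boldbeta<0$ (part~2). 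As $\widetilde{\mathbi{\beta}}/\|\widetilde{\mathbi{\beta}}\|=\pm\hatboldbeta/\|\hatboldbeta\|\to\pm\boldbeta/\|\boldbeta\|$ and the sign convention $\boldbeta{}^\top\widetilde{\mathbi{\beta}}\ge0$ picks the plus sign, $\widetilde{\mathbi{\beta}}\to\betanormaltrue$ a.s.

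\emph{Part 2.} The estimator $\hat{\mathbi{\theta}}=(\widetilde{\mathbi{\beta}}{}^\top,\hat\sigma^2){}^\top$ solves the square $7$-dimensional system $\frac1n\sum_{i=1}^n\mathbi{g}(x_i,y_i;\hat{\mathbi{\theta}})=\mathbf0$ whose first six coordinates are the null-vector equation ${\boldsymbol\varPsi}_n(\hat\sigma^2)\widetilde{\mathbi{\beta}}=\mathbf0$ and whose seventh is the normalisation $\widetilde{\mathbi{\beta}}{}^\top{\boldsymbol\varPsi}'_n(\hat\sigma^2)\widetilde{\mathbi{\beta}}=-n$; at $\mathbi{\theta}_0=(\betanormaltrue^\top,\sigma^2){}^\top$ its value is $\mathbi{g}(x_i,y_i;\mathbi{\theta}_0)=\mathbi{s}_i$. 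From $\ME\,\psi(x_i,y_i;\sigma^2)=\psi(\xi_i,\eta_i;0)$ (and its $v$-derivative), $\psi(\xi_i,\eta_i;0)\boldbeta=\mathbf0$, and $\betanormaltrue^\top\dPsiinf\betanormaltrue=-1$, one checks that $\frac1n\sum_i\ME\,\mathbi{s}_i\to\mathbf0$ --- identically in the first six coordinates, and for the seventh by the convergence of $\frac1n{\boldsymbol\varPsi}'_n(\sigma^2)$ to $\dPsiinf$ (in the functional model one must control the rate of this deterministic convergence, as in~\cite{Shklyar2015p1}). Now run the $Z$-estimator argument: by Part~1 the estimator is eventually in any prescribed neighbourhood of $\mathbi{\theta}_0$; Taylor expansion gives $\mathbf0=\frac1n\sum_i\mathbi{g}(x_i,y_i;\mathbi{\theta}_0)+\bigl(\frac1n\sum_i\nabla_{\mathbi{\theta}}\mathbi{g}(x_i,y_i;\bar{\mathbi{\theta}})\bigr)(\hat{\mathbi{\theta}}-\mathbi{\theta}_0)$ for some $\bar{\mathbi{\theta}}$ on $[\mathbi{\theta}_0,\hat{\mathbi{\theta}}]$; the entries of $\nabla_{\mathbi{\theta}}\mathbi{g}$ are polynomials of degree $\le 4$ in $(x_i,y_i)$, so by a uniform law of large numbers near $\mathbi{\theta}_0$ the Jacobian converges a.s.\ to ${\bf A}:=\left(\begin{smallmatrix}\Psiinf&\dPsiinf\betanormaltrue\\\betanormaltrue^\top\dPsiinf&\frac12\betanormaltrue^\top\ddPsi\betanormaltrue\end{smallmatrix}\right)$, while $\frac1{\sqrt n}\sum_i\mathbi{g}(x_i,y_i;\mathbi{\theta}_0)=\frac1{\sqrt n}\sum_i\mathbi{s}_i+o_P(1)\distrto N(\mathbf0,{\bf B})$ by the Lindeberg--Feller central limit theorem for independent, non-identically distributed summands in the functional model and by the classical central limit theorem in the structural model. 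Hence $\sqrt n(\hat{\mathbi{\theta}}-\mathbi{\theta}_0)=-\bigl(\frac1n\sum_i\nabla_{\mathbi{\theta}}\mathbi{g}(x_i,y_i;\bar{\mathbi{\theta}})\bigr)^{-1}\frac1{\sqrt n}\sum_i\mathbi{g}(x_i,y_i;\mathbi{\theta}_0)\distrto N(\mathbf0,{\bf A}^{-1}{\bf B}\,{\bf A}^{-1})=N(\mathbf0,\varSigma_{\hat\theta})$. It remains to see that ${\bf A}$ is nonsingular: $\Psiinf=\lim\frac1n\sum_i\mathbi{u}_i\mathbi{u}_i^\top$, $\mathbi{u}_i=(\xi_i^2,\xi_i\eta_i,\eta_i^2,\xi_i,\eta_i,1){}^\top$, is symmetric, positive semidefinite, of rank $5$, so $\ker\Psiinf=\mathrm{span}(\boldbeta)$ and $\mathrm{range}\,\Psiinf=\boldbeta^\perp$, and the border $\dPsiinf\betanormaltrue$ leaves this range because $\boldbeta{}^\top\dPsiinf\betanormaltrue$ is a positive multiple of $\boldbeta{}^\top\dPsiinf\boldbeta<0$; a symmetric matrix with a corank-one leading block and a border outside that block's range is invertible.

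\emph{Part 3.} Write $\widehat{\varSigma}_{\hat\theta}(n)={\bf A}(n)^{-1}{\bf B}(n){\bf A}(n)^{-1}$. Expanding $\psi(x_i,y_i;\hat\sigma^2)$ exactly in powers of $v-\sigma^2$ (it is quadratic in $v$), applying the law of large numbers to the coefficients and using $\hat\sigma^2\to\sigma^2$ and $\widetilde{\mathbi{\beta}}\to\betanormaltrue$ (Part~1) gives $\frac1n{\boldsymbol\varPsi}_n(\hat\sigma^2)\to\Psiinf$, $\frac1n{\boldsymbol\varPsi}'_n(\hat\sigma^2)\widetilde{\mathbi{\beta}}\to\dPsiinf\betanormaltrue$, $\frac12\widetilde{\mathbi{\beta}}{}^\top\ddPsi\widetilde{\mathbi{\beta}}\to\frac12\betanormaltrue^\top\ddPsi\betanormaltrue$, i.e.\ ${\bf A}(n)\to{\bf A}$ in probability; similarly a uniform law of large numbers over a neighbourhood of $\mathbi{\theta}_0$, together with the consistency of $(\widetilde{\mathbi{\beta}},\hat\sigma^2)$, yields ${\bf B}(n)=\frac1n\sum_i\hat{\mathbi{s}}_i\hat{\mathbi{s}}_i^\top\to\lim\frac1n\sum_i\ME\,\mathbi{s}_i\mathbi{s}_i^\top={\bf B}$ in probability. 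Since ${\bf A}$ is nonsingular, matrix inversion and multiplication are continuous at ${\bf A}$, so $\widehat{\varSigma}_{\hat\theta}(n)\to{\bf A}^{-1}{\bf B}\,{\bf A}^{-1}=\varSigma_{\hat\theta}$ in probability. The bulk of the work is in Part~2 --- fixing the correct $Z$-equations and verifying that $\mathbi{s}_i$ is asymptotically centred, which rests only on the Hermite-type identities $\ME\,\psi(x_i,y_i;\sigma^2)=\psi(\xi_i,\eta_i;0)$ and $\psi(\xi_i,\eta_i;0)\boldbeta=\mathbf0$, and checking that ${\bf A}$ is nonsingular; the functional-model central limit theorem (independent, non-identically distributed, with a deterministic bias to control) needs extra care. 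Most of these ingredients are already available in~\cite{Shklyar2015p1,Shklyar2015p2}, so the proof mainly amounts to adapting them to the normalisation~\eqref{eq:defbetatilde}.
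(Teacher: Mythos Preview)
Your Part~1 is correct and coincides with the paper's argument (which also reads it off~\eqref{eq:consHatbeta1} and~\eqref{eq:convergdenomQ}).

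For Part~2 there is a genuine gap. You take the seventh estimating equation to be the normalisation $\widetilde{\mathbi\beta}{}^\top\boldsymbol\varPsi'_n(\hat\sigma^2)\widetilde{\mathbi\beta}=-n$ and then assert that the resulting score at $\mathbi\theta_0$ equals $\mathbi s_i$. This is false: evaluating your seventh equation at $\mathbi\theta_0$ gives, per observation, $\tfrac12\bigl(1+\betanormaltrue^\top\psi'_v(x_i,y_i;\sigma^2)\betanormaltrue\bigr)$, which involves $\psi'_v$, whereas the paper's $(\mathbi s_i)_7=\tfrac12\betanormaltrue^\top\psi(x_i,y_i;\sigma^2)\betanormaltrue-\tfrac12$ involves $\psi$. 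In fact $(\mathbi s_i)_7=\tfrac12\betanormaltrue^\top(\mathbi s_i)_{1:6}-\tfrac12$ is an affine function of the first six components, so it cannot be the score of \emph{any} seventh estimating equation whose Jacobian row equals $(\betanormaltrue^\top\dPsiinf,\,\tfrac12\betanormaltrue^\top\ddPsi\betanormaltrue)$. The stated $\varSigma_{\hat\theta}$ is therefore not a plain $Z$-estimator sandwich, and your argument does not produce the matrix $\mathbf B$ of the proposition.

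The parenthetical you flag in the functional model is not a detail but the heart of the matter. With your seventh equation the deterministic bias after summing and scaling is $\tfrac{\sqrt n}{2}\bigl(1+\tfrac1n\betanormaltrue^\top\overline{\boldsymbol\varPsi}{}'_n\betanormaltrue\bigr)$; the hypotheses give $\tfrac1n\overline{\boldsymbol\varPsi}{}'_n\to\dPsiinf$ with \emph{no rate}, so this need not be $o(1)$. Since $\mathbf A^{-1}(0,\ldots,0,1)^\top=(-\betanormaltrue^\top,0)^\top$, the uncontrolled bias contaminates precisely the $\betanormaltrue$-component of $\sqrt n(\widetilde{\mathbi\beta}-\betanormaltrue)$, and you have not shown it is cancelled by the remainder of the expansion. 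The way the paper's $\mathbi s_i$ actually arises is by extracting $\hat\sigma^2-\sigma^2$ from the first six equations alone (left-multiply $\boldsymbol\varPsi_n(\hat\sigma^2)\widetilde{\mathbi\beta}=0$ by $\betanormaltrue^\top$ and use $\betanormaltrue^\top\Psiinf=0$, $\betanormaltrue^\top\dPsiinf\betanormaltrue=-1$), which yields the centred quantity $\betanormaltrue^\top\psi(x_i,y_i;\sigma^2)\betanormaltrue$ rather than one involving $\psi'_v$; this and the handling of the $\betanormaltrue$-direction require the more delicate expansion of \cite{Shklyar2015p1,Shklyar2015p2} that the paper invokes, and which your sketch does not reproduce. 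Part~3 inherits the same issue through $\mathbf B(n)$.
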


\subsection{Estimation methods}
The methods of fitting an algebraic curve (or surface) to observed points
can be classified as follows.\eject

\paragraph{Algebraic distance methods,\protect\eatdot}\
where the residuals in the equations for the algebraic curve
are minimized.
For example, the minimum point of the sum of squared residuals
$\sum_{i=1}^n (A x_i^2 + 2 B x_i y_i + C y_i^2 +
2 D x_i + 2 E y_i + F)^2$
(with some normalizing constraint in order to avoid
$A = B = \ldots = F = 0$) in the conic fitting problem
and
$\sum_{i=1}^n (k_1 x_i + h_1 - y_i)^2 (k_2 x_i + h_2 - y_i)^2$
in the two-line fitting problem is called
the \emph{ordinary least squares\/} (OLS) estimator.

The criterion function for the OLS estimator is simple enough
and can be adjusted so that the resulting estimator is consistent
(under some conditions).
Such an estimator is called the
\emph{adjusted least squares\/} (ALS) estimator.
The OLS and ALS estimators are method-of-moments estimators,
meaning that the criterion functions for the estimators
are polynomials whose coefficients are sample moments of
coordinates of the observed points.
Hence, the OLS and ALS estimators can be computed efficiently.

In order to obtain parameters of two lines,
the observed points are fitted with a conic section,
and then the parameters of the conic section
are used to obtain the parameters of two lines.
There are some papers where this idea is used.

The problem of estimating the fundamental matrix
for two-camera view is considered in \cite{Kukush2002}.
The fundamental matrix is a singular matrix
whose left and right null-vectors are
the coordinates of each camera in
 the coordinate system of the other camera.
Initially, the ALS estimator of the fundamental matrix
is evaluated.
Then it is projected so that
the estimated fundamental matrix is singular.

In \cite{Vidal2005} the problem of segmentation of
a finite-dimensional vector space
onto linear subspaces is considered,
and the generalized principal component analysis method
is introduced.
The sample is fitted with an algebraic cone
(a~set of points that satisfy a homogeneous algebraic equation)
by the OLS method.
Then subspaces are extracted from the algebraic cone
with use of a~small learning sample.
An application of segmentation of a vector space
onto hyperplanes for
searching  planes on binocular image is given in \cite{Zelnik-Manor1999}.

In \cite{Waibel2015} an ellipsoid fitting problem with a constraint
 such that a center of the ellipsoid lies
on a given line is considered.
The algebraic distance with embedded constraint is
minimized. The analytical (behavioral) properties
of the optimization problem are studied.
We consider a conic section fitting problem
but with different constraint---the conic is degenerated to a couple of straight lines.

\paragraph{Geometric distance methods,\protect\eatdot}\
where distances between the estimated curve and each point
are minimized.
The sum of squares of those distances is minimized, and
the {\it orthogonal regression} (OR) estimator is obtained.

A numerical algorithm for evaluation of the orthogonal regression
estimator is presented in monograph \cite{Ahn2004}.

The orthogonal regression is consistent
in the single straight line fitting problem
\cite[Section 1.3.2(a)]{ChengVanNess1999}.
In nonlinear models, the estimator may be inconsistent.
There is a one-step correction procedure in explicit
and implicit models
\cite{Fazekas2004,Repetatska2010}
with application in the ellipsoid fitting model \cite{Repetatska2010}.
However, in the two-line fitting model, the correction from
\cite{Repetatska2010} is unstable.

\paragraph{Probabilistic methods}\
They are used to obtain the \emph{maximum likelihood\/} (ML)
estimator and Bayes estimators.

\subsection{Notation}
Let $\{A_n, \; n=1, \, 2,\, \ldots\}$
be a sequence of random events.
The random event $A_n$ is said to hold eventually if
almost surely there exists $n_0$ such that
$A_n$ occurs for all $n \ge n_0$.
In other words, the random event $A_n$ holds eventually
if and only if it does not occur  only  for finitely many $n$ almost surely.

The estimator $\hat\beta$ is called asymptotically normal
if $\sqrt{n} (\hat\beta - \beta_{\rm true}) \to N(0, \varSigma)$
in distribution, were the asymptotic covariance matrix
$\varSigma$ may be singular,
and $n$ is the sample size.
This definition differs from the conventional one
adopted in asymptotic theory because here
only $\sqrt{n}$-asymptotic normality is considered.

Let $\mathbi{\zeta} \sim N(\mathbi{\mu}, \varSigma)$
be a bivariate random vector.
Then $E = \{\mathbi{z} : (\mathbi{z} - \mathbi{\mu})^\top\allowbreak
\varSigma^{-1} (\mathbi{z} - \mathbi{\mu}) \le 1\}$
is called the 40\% ellipsoid of the normal distribution
because $\Prob (\mathbi{\zeta}\in E) \approx 0.3935$.
This is the ellipsoid where the probability density function
is at least $0.3679$ of its maximum.

\subsection{Outline}
In Section~\ref{sec:estimators}, we construct five estimators
for parameters of the two line fitting model.
In Section~\ref{sec:equivariance},
we propose two definitions of the equivariance of an estimator and  state that all of the five estimators
are equivariant.
The estimators are compared numerically in Section~\ref{sec:simulations}.
The proofs are given in Appendix~\ref{Appendix}.

\section{Estimators}\label{sec:estimators}

\subsection{ALS2 estimator and its projections}
The two-line fitting model is a restriction of
the conic section fitting model.
A~couple of lines defined by the equation
$(k_1 \xi - \eta + h_1) (k_2 \xi - \eta + h_2) = 0$
is a degenerate conic section
\begin{equation}
A \xi^2 + 2 B \xi \eta + C \eta^2 + 2 D \xi + 2 E \eta +
F = 0, \label{eq:conicxieta}
\end{equation}
with coefficients
\begin{equation}
\begin{aligned}
 & A = C k_1 k_2,              &  & 2 D = C(k_1 h_2 + k_2 h_1), \\
 & 2 B = -C (k_1 + k_2),\quad  &  & 2 E = -C(h_1 + h_2),        \\
 &                             &  & F = C h_1 h_2,
\end{aligned} \label{eqs:2linesTOconic}
\end{equation}
with a constraint $C \neq 0$.

The conic section ALS2 estimator provides estimation of the error variance~$\sigma^2$ and the coefficients $A,\allowbreak B, \ldots,\allowbreak F$.

Denote by $\nu(i)\in \{1, 2\}$ the indicator of a line which the true point
$(\xi_i,\eta_i)$ belongs to. Equation~\eqref{q11a} can be rewritten as
\[
\eta_i = k_{\nu(i)} \xi_i + h_{\nu(i)},
\quad i=1,2,\ldots.
\]
The indicator $\nu(i)$ is nonrandom in the {functional} model,
and it is a random variable in the {structural} model.
%\textcolor{red}{\{AU query: ``functional'' both times?\}}

\begin{prop}\label{prop:2lcALS2f}
Let, in the functional model,\begin{gather*}
\sum_{n=1}^\infty \frac{\xi_n^6}{n^2} <
\infty;
\\
\mbox{either}\quad k_1 \neq k_2 \quad \mbox{or} \quad
\begin{cases}
h_1 \neq h_2, \\
\sup_{n\ge 1} \frac{1}{n}
\sum_{i=1}^n \xi_i^2 < \infty;\ \mbox{and}
\end{cases}
\\
\liminf_{n\to\infty} \lambda_{\min} \left(
\frac{1}n \sum_{\substack{i=1,\ldots,n\\ \nu(i)=j}} \begin{pmatrix}
1        & \xi_i^{} & \xi_i^2 \\[2pt]
\xi_i^{} & \xi_i^2  & \xi_i^3 \\[2pt]
\xi_i^2  & \xi_i^3  & \xi_i^4
\end{pmatrix} \right) > 0
\quad \mbox{for $j=1,\,2$}.
\end{gather*}
Then the ASL2 estimators $\hatboldbeta$ and $\hat\sigma^2$
are strongly consistent in the sense of~\eqref{eq:consHatbeta1}
and~\eqref{eq:consHatsigma2}.
\end{prop}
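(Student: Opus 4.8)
\emph{Proof outline.} The plan is to deduce the statement from the strong consistency of the ALS2 estimator in the \emph{conic section} fitting model. By the relations~\eqref{eqs:2linesTOconic}, a couple of (non-vertical) lines with parameters $k_1,h_1,k_2,h_2$ is exactly the degenerate conic~\eqref{eq:conicxieta} whose coefficient vector $\boldbeta$ is nonzero (since $C\neq0$); hence the two-line model is the restriction of the conic model to this parameter, and~\eqref{eq:consHatbeta1}--\eqref{eq:consHatsigma2} for the conic ALS2 estimator is precisely what has to be proved. The present hypotheses do not force the averages $\frac1n\sum_{i\le n}\xi_i^p\eta_i^q$ to converge, so the limit matrix $\Psiinf$ need not exist and Proposition~\ref{prop:corec1} cannot be applied verbatim; I would therefore invoke the strong consistency of the conic ALS2 estimator established in~\cite{KukushMVH,Shklyar2007}, which holds under a stochastic (summability) condition together with a deterministic identifiability condition, and verify both.

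\emph{Stochastic condition.} Each entry of $\psi(x_i,y_i;v)$ is a polynomial of degree at most $4$ in $(x_i,y_i)$; since the true point lies on a line, $|\eta_i|\le|k_{\nu(i)}|\,|\xi_i|+|h_{\nu(i)}|$, so, viewed as a polynomial in $(\xi_i,\eta_i,\delta_i,\varepsilon_i)$ with Gaussian errors, that entry has variance $O(\xi_i^6+1)$; likewise each entry of $\psi'_v(x_i,y_i;v)$ has variance $O(\xi_i^2+1)$, and $\ddPsi$ is a constant matrix. Hence the hypothesis $\sum_n\xi_n^6/n^2<\infty$ makes $\sum_n n^{-2}\mathrm{Var}(\cdot)$ finite for every entry, so Kolmogorov's strong law of large numbers for independent, non-identically distributed summands gives $\frac1n\bigl({\boldsymbol\varPsi}_n(v)-\ME{\boldsymbol\varPsi}_n(v)\bigr)\to0$ a.s.\ for each $v$, hence uniformly on compact $v$-sets because ${\boldsymbol\varPsi}_n(v)$ is quadratic in $v$, and likewise for $\overline{\boldsymbol\varPsi}{}'_n(v)$. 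In particular $\frac1n\boldbeta^\top{\boldsymbol\varPsi}_n(\sigma^2)\boldbeta\to0$ a.s., since $\ME\psi(x_i,y_i;\sigma^2)=\psi(\xi_i,\eta_i;0)$ and $\boldbeta^\top\psi(\xi_i,\eta_i;0)\boldbeta=(A\xi_i^2+2B\xi_i\eta_i+\cdots+F)^2=0$ on the true lines.

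\emph{Identifiability condition.} Set $\mathbi{m}(\xi,\eta)=(\xi^2,\xi\eta,\eta^2,\xi,\eta,1)^\top$, so $\psi(\xi,\eta;0)=\mathbi{m}(\xi,\eta)\mathbi{m}(\xi,\eta)^\top$ and, on line $j$, $\mathbi{m}(\xi_i,\eta_i)=M_j(1,\xi_i,\xi_i^2)^\top$ for a fixed $6\times3$ matrix $M_j=M_j(k_j,h_j)$ of full column rank $3$. Thus $\overline{\boldsymbol\varPsi}{}_n=\sum_{j=1,2}M_jV_{n,j}M_j^\top$, where $V_{n,j}$ is the Vandermonde--Gram matrix in the third hypothesis, for which $\liminf_n\lambda_{\min}(\tfrac1nV_{n,j})>0$. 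Because a conic that contains two distinct lines must be their union, $M_1^\top\mathbi{v}=M_2^\top\mathbi{v}=0$ holds only when $\mathbi{v}$ is proportional to $\boldbeta$ (this is where the lines must be distinct, i.e.\ $h_1\neq h_2$ when $k_1=k_2$); hence $U:=\operatorname{span}(M_1,M_2)$ is a fixed $5$-dimensional subspace with $U^\perp=\operatorname{span}(\boldbeta)$, so $\frac1n\overline{\boldsymbol\varPsi}{}_n$ has range in $U$, annihilates $\boldbeta$, and for $\mathbi{w}\in\boldbeta^\perp$ satisfies $\mathbi{w}^\top\tfrac1n\overline{\boldsymbol\varPsi}{}_n\mathbi{w}\ge\kappa\,\min_j\lambda_{\min}(\tfrac1nV_{n,j})\,\|\mathbi{w}\|^2$ with $\kappa=\min\{\|M_1^\top\mathbi{w}\|^2+\|M_2^\top\mathbi{w}\|^2:\|\mathbi{w}\|=1,\ \mathbi{w}\in U\}>0$. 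Consequently $\liminf_n\min\{\mathbi{w}^\top\tfrac1n\overline{\boldsymbol\varPsi}{}_n\mathbi{w}:\mathbi{w}\perp\boldbeta,\ \|\mathbi{w}\|=1\}>0$, which identifies $\boldbeta$ as the unique asymptotic null direction.

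The step I expect to be the main obstacle is upgrading ``$\boldbeta$ is the asymptotic null direction'' to $\hat\sigma^2\to\sigma^2$. Since $\frac1n\ME{\boldsymbol\varPsi}_n(v)=\frac1n\overline{\boldsymbol\varPsi}{}_n+(v-\sigma^2)\frac1n\overline{\boldsymbol\varPsi}{}'_n+\tfrac12(v-\sigma^2)^2\ddPsi$, one must control the first-order correction $\frac1n\overline{\boldsymbol\varPsi}{}'_n$ near $v=\sigma^2$, and the positive quantity that must dominate its operator norm is the curvature in the $\boldbeta$-direction, $-\tfrac1n\boldbeta^\top\overline{\boldsymbol\varPsi}{}'_n\boldbeta=\tfrac1n\sum_i\|\nabla Q(\xi_i,\eta_i)\|^2$, where $Q=\mathbi{m}^\top\boldbeta$ is the conic polynomial (the sign $\boldbeta^\top\dPsiinf\boldbeta<0$ is part~2 of Proposition~\ref{prop:corec1}), the identity $-\boldbeta^\top\psi'_v(\xi,\eta;0)\boldbeta=\|\nabla Q(\xi,\eta)\|^2$ on $\{Q=0\}$ being a direct moment computation of the type used to build $\psi$. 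For the two-line conic, $\|\nabla Q(\xi_i,\eta_i)\|^2$ equals, up to the positive factor $C^2(k_{\nu(i)}^2+1)$, the square of the linear form $(k_1-k_2)\xi_i+(h_1-h_2)$. When $k_1\neq k_2$ the lines meet, this curvature and $\|\frac1n\overline{\boldsymbol\varPsi}{}'_n\|$ are both of order $1+\frac1n\sum_i\xi_i^2$, and the required comparison follows from the third hypothesis alone. When $k_1=k_2$ (with $h_1\neq h_2$, so the parallel lines are still distinct and $\boldbeta$ remains the unique joint null vector) that linear form is the constant $h_1-h_2$, so the curvature stays bounded while $\|\frac1n\overline{\boldsymbol\varPsi}{}'_n\|$ may be of order $1+\frac1n\sum_i\xi_i^2$; keeping the comparison then forces exactly the extra hypothesis $\sup_n\frac1n\sum_{i\le n}\xi_i^2<\infty$, which geometrically forbids the sample from escaping to infinity along the parallel lines. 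Carrying out this $k_1=k_2$ estimate and feeding it, together with the stochastic bounds above, into the conic consistency theorem yields~\eqref{eq:consHatsigma2}, and then~\eqref{eq:consHatbeta1}.
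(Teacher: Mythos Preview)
Your approach is correct and essentially the same as the paper's: both reduce to verifying the hypotheses of an external consistency theorem for the conic ALS2 estimator (the paper cites Theorem~1 in~\cite{Shklyar2015p1}; you cite~\cite{KukushMVH,Shklyar2007}), with the crux being the second-eigenvalue lower bound
\[
\liminf_{n\to\infty}\tfrac1n\,\lambda_{\min,2}(\overline{\boldsymbol\varPsi}_n)>0.
\]
The technical execution differs slightly. The paper deletes the $\eta^2$-row/column to get a $5\times5$ principal submatrix $\sum_j K_j V_{n,j}K_j^\top$ with $K_j\in\mathbb{R}^{5\times3}$, computes $\det(K_1K_1^\top+K_2K_2^\top)=2(h_1-h_2)^4+2(h_1-h_2)^2(k_1-k_2)^2+2(k_1-k_2)^4>0$ explicitly, and then invokes Cauchy interlacing. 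You work with the full $6\times3$ matrices $M_j$ and the subspace $\boldbeta^\perp$ directly, replacing the determinant computation by the geometric observation that a conic containing two distinct lines must be their union; your constant $\kappa$ is exactly the second-smallest eigenvalue of $M_1M_1^\top+M_2M_2^\top$, so the two arguments are equivalent. Your route is a bit cleaner (no interlacing, no explicit determinant), and you additionally spell out where each remaining hypothesis---the summability $\sum\xi_n^6/n^2<\infty$ for the SLLN and the bound $\sup_n\tfrac1n\sum\xi_i^2<\infty$ in the parallel case for the curvature comparison---enters, which the paper leaves implicit.
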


There are two cases where the structural model is not identifiable.
If the common distribution of the true points is concentrated
on a straight line and on a single point (presumably not on the line), that is,
\begin{equation}
\label{nonindef2} \exists \,\mbox{line}\, \ell \subset \mathbb{R}^2 \;
\exists z \in\mathbb{R}^2 : \supp{(\xi_1,
\eta_1)} \subset \ell \cup \{z\},
\end{equation}
then there are many ways to fit the true points with two lines.
If the common distribution of the true points is concentrated in
four points, that is,
\begin{equation}
\label{nonindef1} \# \supp{(\xi_1, \eta_1)} = 4,
\end{equation}
then there are three ways to fit the true points with two lines
(unless three of the four points lie on a straight line,
which is a particular case of \eqref{nonindef2}).

\begin{prop}\label{prop:2lcALS2s}
In the structural model, assume that
$\ME |\xi_1|^3 < \infty$
and that nonidentifiability conditions \eqref{nonindef2}
and \eqref{nonindef1} do not hold.
Then the ALS2 estimator is strongly consistent in the sense of
\eqref{eq:consHatbeta1} and \eqref{eq:consHatsigma2}.
\end{prop}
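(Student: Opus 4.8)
The plan is to split on whether $\ME\xi_1^4<\infty$: in that case I would simply invoke the conic‑section consistency result, Proposition~\ref{prop:corec1}, and in the complementary case argue directly on the (normalised) ALS objective, since then $\Psiinf$ need not even have finite entries. A few reductions come first. Because $\eta_i=k_{\nu(i)}\xi_i+h_{\nu(i)}$ with $\nu(i)\in\{1,2\}$, the hypothesis $\ME|\xi_1|^3<\infty$ yields $\ME|\eta_1|^3<\infty$ and in particular $\ME\xi_1^2<\infty$, $\ME\eta_1^2<\infty$. Writing $\mathbi{t}(x,y)=(x^2,xy,y^2,x,y,1)^\top$, one has $\psi(\xi,\eta;0)=\mathbi{t}(\xi,\eta)\mathbi{t}(\xi,\eta)^\top$ and $\Psiinf=\ME[\mathbi{t}(\xi_1,\eta_1)\mathbi{t}(\xi_1,\eta_1)^\top]$ when the right‑hand side is finite. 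I would then translate the identifiability hypotheses into: under the negations of~\eqref{nonindef2} and~\eqref{nonindef1}, the true degenerate conic~\eqref{eqs:2linesTOconic} is, up to a scalar, the only second‑order curve through $\supp(\xi_1,\eta_1)$. This is a finite case analysis on the numbers $M_1,M_2$ of support points on the two lines: if some $M_j\le1$ then~\eqref{nonindef2} holds; if $M_1=M_2=2$ then either~\eqref{nonindef1} holds or three support points are collinear and~\eqref{nonindef2} holds; otherwise some $M_j\ge3$, so any conic through the support contains that whole line, is therefore reducible, and its second component is forced to be the other line (one checks separately that the configuration in which one of the two points on the smaller line is the intersection point falls under~\eqref{nonindef2}). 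In particular $\rank\Psiinf=5$ whenever $\Psiinf$ is finite, and then Proposition~\ref{prop:corec1} gives~\eqref{eq:consHatbeta1}--\eqref{eq:consHatsigma2} at once, disposing of the case $\ME\xi_1^4<\infty$.

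For the case $\ME\xi_1^4=\infty$ I would work with $n^{-1}\boldbeta^\top\boldsymbol\varPsi_n(v)\boldbeta$. Using $\psi(x,y;v)=\mathbi{t}(x,y)\mathbi{t}(x,y)^\top-q(x,y;v)$, where every entry of $q(\cdot\,;v)$ has degree $\le2$ in $(x,y)$ (the generalised‑Hermite correction lowers the degree by two), split $n^{-1}\boldbeta^\top\boldsymbol\varPsi_n(v)\boldbeta=n^{-1}\sum_i(\boldbeta^\top\mathbi{t}(x_i,y_i))^2-n^{-1}\sum_i\boldbeta^\top q(x_i,y_i;v)\boldbeta$; the second sum converges a.s.\ uniformly over $\|\boldbeta\|=1$ and bounded $v$, since $n^{-1}\sum x_i^ay_i^b$ converges for $a+b\le2$. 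On $\{\nu(i)=j\}$ one has $y_i=k_jx_i+c_{ij}$ with $c_{ij}=h_j+\varepsilon_i-k_j\delta_i$, hence $\boldbeta^\top\mathbi{t}(x_i,y_i)=(\beta_1+k_j\beta_2+k_j^2\beta_3)x_i^2+(\text{affine in }x_i)$. Therefore, if $\hatboldbeta/\|\hatboldbeta\|$ had a limit point $\boldbeta_\ast$ with $\beta_{\ast1}+k_j\beta_{\ast2}+k_j^2\beta_{\ast3}\neq0$ for some $j$, then along that subsequence $n^{-1}\sum_i(\hatboldbeta^\top\mathbi{t}(x_i,y_i))^2\ge c\,n^{-1}\sum_{\nu(i)=j}x_i^4-C\to\infty$, because $\ME[(\xi_1+\delta_1)^4\mid\nu(1)=j]=\infty$ forces $n^{-1}\sum_{\nu(i)=j}x_i^4\to\infty$; this contradicts $n^{-1}\hatboldbeta^\top\boldsymbol\varPsi_n(\hat\sigma^2)\hatboldbeta=0$ (which holds since $\lambda_{\min}(\boldsymbol\varPsi_n(\hat\sigma^2))=0$) together with boundedness of the $q$‑sum, for which a crude a priori bound $\hat\sigma^2=O(1)$ suffices. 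So every limit point of $\hatboldbeta/\|\hatboldbeta\|$ satisfies $\beta_1+k_j\beta_2+k_j^2\beta_3=0$ for $j=1,2$, which by~\eqref{eqs:2linesTOconic} pins its leading triple $(\beta_1,\beta_2,\beta_3)$ to a multiple of $(A,2B,C)$ (or into the two‑dimensional space they span when $k_1=k_2$).

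On that closed ``cone'' the maps $\boldbeta\mapsto\boldbeta^\top\psi(x_1,y_1;v)\boldbeta$ are dominated, uniformly in $\boldbeta$, by an integrable function (restricted to either line they are polynomials of degree $\le2$ in $\xi_1$), so a uniform strong law gives $n^{-1}\boldbeta^\top\boldsymbol\varPsi_n(\sigma^2)\boldbeta\to\ME[(\boldbeta^\top\mathbi{t}(\xi_1,\eta_1))^2]$ uniformly there, a limit that vanishes only for $\boldbeta$ proportional to the true parameter by the identifiability translation; since the objective at $\hatboldbeta$ is $0$, this is~\eqref{eq:consHatbeta1}. For $\hat\sigma^2$, the scalar quadratic $v\mapsto n^{-1}\boldbeta^\top\boldsymbol\varPsi_n(v)\boldbeta$ has value, first and second $v$‑derivatives at $\sigma^2$ tending a.s.\ to $0$, $\boldbeta^\top\dPsiinf\boldbeta$ and $\boldbeta^\top\ddPsi\boldbeta$ (all needing only $\ME\xi_1^2<\infty$, again because the top‑degree parts degenerate on the two lines), with $\boldbeta^\top\dPsiinf\boldbeta<0$ verified directly as in part~2 of Proposition~\ref{prop:corec1}; combining this with the previous paragraph makes $\boldsymbol\varPsi_n(v)$ eventually positive definite on $[0,\sigma^2-\varepsilon]$ and $\lambda_{\min}(\boldsymbol\varPsi_n(\sigma^2+\varepsilon))$ eventually negative, so the unique nonnegative root $\hat\sigma^2$ of $\lambda_{\min}(\boldsymbol\varPsi_n(\cdot))$ is eventually in $(\sigma^2-\varepsilon,\sigma^2+\varepsilon)$, giving~\eqref{eq:consHatsigma2}. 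The main obstacle is the confinement step: converting the bare assumption $\ME|\xi_1|^3<\infty$ into the fact that the normalised estimator is forced into the cone where only second moments matter — off that cone the relevant functions have a non‑integrable envelope, so ``$\varepsilon$‑closeness'' is not enough and one must pass through limit points — together with the parallel‑line subcase $k_1=k_2$, where the leading‑triple constraint has codimension one rather than two and one must balance the rate $\hat\beta_1+k\hat\beta_2+k^2\hat\beta_3\to0$ against the growth of $n^{-1}\sum x_i^4$.
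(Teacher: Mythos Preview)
The paper does not reprove consistency here: it simply invokes Proposition~25 of \cite{Shklyar2007}, which already gives strong consistency of the ALS2 estimator in the conic-section structural model under a third-moment hypothesis and an identifiability condition (S5${-}$); the only work done in the present paper is the verification of (S5${-}$), and that verification is exactly your case analysis on how a conic can meet a pair of lines (at most four isolated points, a line and an isolated point, or the pair of lines itself). So your identifiability translation matches the paper, while your entire treatment of the case $\ME\xi_1^4=\infty$ is an attempt to replace an external citation by a self-contained argument.

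That attempt has a gap you do not flag. From $\ME\xi_1^4=\infty$ you only get $\ME[\xi_1^4\mid\nu(1)=j]=\infty$ for \emph{some} $j$, not for both. Your divergence step asserts $\ME[(\xi_1+\delta_1)^4\mid\nu(1)=j]=\infty$ for the specific $j$ at which the limit point $\boldbeta_\ast$ violates the leading-coefficient constraint, and nothing ties that $j$ to a line of infinite conditional fourth moment. In the mixed case (say line~2 infinite, line~1 finite) your confinement only yields the single constraint $\beta_1+k_2\beta_2+k_2^2\beta_3=0$, a codimension-one cone on which your domination claim (``restricted to either line they are polynomials of degree $\le2$ in $\xi_1$'') fails on line~1. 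The repair is to keep the cone codimension-one there and observe that the line-1 contribution is nonetheless integrable because line~1 has finite fourth moment, so the uniform SLLN still applies on that larger cone and identifiability finishes as before. Separately, the ``crude a~priori bound $\hat\sigma^2=O(1)$'' is not free: you need it before you can bound the $q$-sum, and the clean way to obtain it is to evaluate $n^{-1}\boldbeta^\top\boldsymbol\varPsi_n(v)\boldbeta$ at the \emph{true} $\boldbeta$ (which, as you yourself note later, requires only $\ME\xi_1^2<\infty$), show it is eventually negative at $v=\sigma^2+\varepsilon$, and then use the uniqueness of the nonnegative root of $\lambda_{\min}(\boldsymbol\varPsi_n(\cdot))=0$.
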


In order to estimate the parameters $k_1$, $h_1$, $k_2$, and $h_2$,
we can solve Eqs.~\eqref{eqs:2linesTOconic}.
With ignoring the last equation $F = C h_1 h_2$,
the solution is
\begin{gather}
\label{AE-to-kh-a} k_{1,2} = \frac{-B \pm \sqrt{B^2 - AC}}{C},
\\
h_1 = \frac{2 (D + k_1 E)} {C (k_2 - k_1)},
\\
h_2 = \frac{2 (D + k_2 E)} {C (k_1 - k_2)}. \label{AE-to-kh-c}
\end{gather}

Substituting the elements of the ALS2 estimator
$\hatboldbeta = (\hat A, 2 \hat B, \hat C, 2\hat D, 2\hat E, \hat F)^\top$
into the right-hand side of \eqref{AE-to-kh-a}--\eqref{AE-to-kh-c},
we obtain an ``ignore-$\widehat F$'' estimator:
\begin{gather}
\hat k_{1,2} = \frac{-\hat B \pm \sqrt{\hat B^2 - \hat A\hat C}}{\hat C}, \label{AE-to-kh-esta}
\\
\hat h_1 = \frac{2 (\hat D + \hat k_1 \hat E)}                {
\hat C (\hat k_2 - \hat k_1)},
\\
\hat h_2 = \frac{2 (\hat D + \hat k_2 \hat E)}                {
\hat C (\hat k_1 - \hat k_2)}. \label{AE-to-kh-estc}
\end{gather}

If the  conic section estimated by the ALS2 estimator is a hyperbola,
then the ``ignore-$\widehat F$'' estimate of the two lines
comprises the asymptotes of the hyperbola.

Choose the sign $\pm$ in \eqref{AE-to-kh-esta}
such that $\hat k_1 < \hat k_2$.

We need the notation
\[
(k_1, h_1, k_2, h_2)^\top
= \betatolines(\boldbeta)
\]
for the function that expresses the line parameters $k_1$, $h_1$, $k_2$, $h_2$
in elements of $\boldbeta$ and is
defined \xch{by}{by Eqs.}~\eqref{AE-to-kh-a}--\eqref{AE-to-kh-c}.
With this notation, we can write
\[
(\hat k_1, \hat h_1, \hat k_2, \hat
h_2)^\top = \betatolines(\hatboldbeta) = \betatolines(
\widetilde{\boldbeta}) .
\]

\begin{prop}\label{prop:ignF_func_cons}
In the functional model, assume the following:
\begin{gather*}
k_1 < k_2,
\\
\sum_{n=1}^\infty \frac{\xi_n^6}{n^2} <
\infty,\quad \mbox{and}
\\
\liminf_{n\to\infty} \lambda_{\min} \left(
\frac{1}n \sum_{\substack{i=1,\ldots,n\\ \nu(i)=j}} \begin{pmatrix}
1        & \xi_i^{} & \xi_i^2 \\[2pt]
\xi_i^{} & \xi_i^2  & \xi_i^3 \\[2pt]
\xi_i^2  & \xi_i^3  & \xi_i^4
\end{pmatrix} \right) > 0
\quad \mbox{for $j=1,\,2$}.
\end{gather*}
Then the ``ignore-$\widehat F$'' estimator of the parameters of two lines
is strongly consistent, that is,
\[
\hat k_j \to k_j, \qquad \hat h_j \to
h_j, \quad j=1,\,2,
\]
 as $n \to \infty$ almost surely.
\end{prop}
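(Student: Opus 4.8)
The plan is to obtain the statement as a continuous-mapping consequence of the strong consistency of the ALS2 estimator $\hatboldbeta$, so the two ingredients are: (i) $\hatboldbeta$ is consistent under the present hypotheses, and (ii) the map $\betatolines$ is continuous and scale-invariant at the true parameter $\boldbeta$.

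For (i), I would simply check that the hypotheses of Proposition~\ref{prop:ignF_func_cons} imply those of Proposition~\ref{prop:2lcALS2f}. The assumption $k_1<k_2$ gives $k_1\neq k_2$, so the first alternative in Proposition~\ref{prop:2lcALS2f} holds and the conditions on $h_1,h_2$ and on $\sup_n n^{-1}\sum_{i\le n}\xi_i^2$ are not needed; the remaining two hypotheses, $\sum_{n\ge1}\xi_n^6/n^2<\infty$ and $\liminf_n\lambda_{\min}(\cdots)>0$ for $j=1,2$, are verbatim the same. Hence $\hatboldbeta$ is strongly consistent in the sense of \eqref{eq:consHatbeta1}; equivalently, almost surely either $\hatboldbeta/\|\hatboldbeta\|\to\boldbeta/\|\boldbeta\|$ or $\hatboldbeta/\|\hatboldbeta\|\to-\boldbeta/\|\boldbeta\|$.

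For (ii), I would use the explicit formulas \eqref{eqs:2linesTOconic}: at the true parameter $C\neq0$, $k_2-k_1>0$, and the discriminant equals $B^2-AC=C^2(k_1-k_2)^2/4>0$. Hence there is a punctured ball $U$ around $\boldbeta$ in $\mathbb{R}^6\setminus\{\mathbf{0}\}$ on which the third coordinate keeps a constant nonzero sign, $B^2-AC$ stays positive and bounded away from $0$, and the two roots in \eqref{AE-to-kh-a}, after the ordering convention, stay distinct; on $U$ the map $\betatolines$ is well-defined and continuous (the roots are continuous, $\min$ and $\max$ of continuous functions are continuous, and in \eqref{AE-to-kh-c} one divides by $C(k_2-k_1)\neq0$). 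Moreover $\betatolines$ is homogeneous of degree $0$: multiplying its argument by $\lambda\neq0$ scales $A,\dots,E$ by $\lambda$, the factor $|\lambda|$ coming from $\sqrt{B^2-AC}$ cancels the sign of $\lambda$ in the denominator of \eqref{AE-to-kh-a} so that the unordered pair $\{k_1,k_2\}$ is unchanged, and each $h_j$ in \eqref{AE-to-kh-c} is a ratio of two expressions homogeneous of degree $1$ in the argument. Consequently $\betatolines$ is continuous at both $\boldbeta/\|\boldbeta\|$ and $-\boldbeta/\|\boldbeta\|$ with the common value $(k_1,h_1,k_2,h_2)^\top$, and $\betatolines(\hatboldbeta)=\betatolines(\hatboldbeta/\|\hatboldbeta\|)$ is well-defined as soon as $\hatboldbeta/\|\hatboldbeta\|$ lies in $U\cup(-U)$.

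Combining, since $(\hat k_1,\hat h_1,\hat k_2,\hat h_2)^\top=\betatolines(\hatboldbeta)=\betatolines(\hatboldbeta/\|\hatboldbeta\|)$ for $n$ large and $\hatboldbeta/\|\hatboldbeta\|$ converges almost surely to one of $\pm\boldbeta/\|\boldbeta\|$, continuity gives $\betatolines(\hatboldbeta)\to(k_1,h_1,k_2,h_2)^\top$ a.s., the ordering $\hat k_1<\hat k_2$ being automatically consistent with $k_1<k_2$ for large $n$; this is exactly $\hat k_j\to k_j$ and $\hat h_j\to h_j$, $j=1,2$. The only point needing care---and a mild one---is that $\hatboldbeta$ is normalized only up to sign and that the square root in \eqref{AE-to-kh-esta} has two branches, so one must check that neither ambiguity makes $\betatolines$ discontinuous at $\boldbeta$; this is precisely what the degree-zero homogeneity together with the strict inequalities $C\neq0$ and $k_1\neq k_2$ (which keep $B^2-AC$ away from $0$) take care of. The rest is a routine application of the continuous mapping theorem.
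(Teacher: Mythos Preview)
Your argument is correct and follows exactly the route the paper takes: deduce consistency of $\hatboldbeta$ from Proposition~\ref{prop:2lcALS2f} and then apply the continuous mapping theorem via the continuity (and scale invariance) of $\betatolines$ at the true parameter. The paper's own proof is a one-sentence sketch of this, so your write-up simply fills in the details---including the verification that $C\neq0$, $B^2-AC>0$, and that the sign ambiguity in $\hatboldbeta$ is absorbed by the degree-zero homogeneity of $\betatolines$---that the paper leaves implicit.
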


\begin{prop}\label{prop:ignF_struct_cons}
If in the structural model, $k_1 < k_2$,
$\ME |\xi_1|^3 < \infty$,
and neither condition \eqref{nonindef2} nor condition \eqref{nonindef1}
holds,
then the ``ignore-$\widehat F$'' estimator is consistent.
\end{prop}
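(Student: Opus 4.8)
The plan is to reduce the claim to the consistency of the conic-section ALS2 estimator, which is already available, plus a continuity argument, since by construction the ``ignore-$\widehat F$'' estimator $(\hat k_1,\hat h_1,\hat k_2,\hat h_2)^\top$ equals $\betatolines(\hatboldbeta)$, the image of the ALS2 estimator $\hatboldbeta$ under the map given by \eqref{AE-to-kh-a}--\eqref{AE-to-kh-c} together with the convention $\hat k_1<\hat k_2$.

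First I would note that the hypotheses here — $\ME|\xi_1|^3<\infty$ and the failure of the nonidentifiability conditions \eqref{nonindef2} and \eqref{nonindef1} — are exactly those of Proposition~\ref{prop:2lcALS2s}. Hence the ALS2 estimator is strongly consistent in the sense of \eqref{eq:consHatbeta1}: after choosing the sign appropriately, $\hatboldbeta/\|\hatboldbeta\|\to\pm\,\boldbeta/\|\boldbeta\|$ a.s., where $\boldbeta=(A,2B,C,2D,2E,F)^\top$ is the coefficient vector \eqref{eqs:2linesTOconic} of the true pair of lines, with $C\neq0$.

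Next I would verify that $\betatolines$ (with the $k_1<k_2$ convention) is continuous, as a function of the line $\mathbb{R}\hatboldbeta$ spanned by $\hatboldbeta$, on a neighbourhood of $\mathbb{R}\boldbeta$, and that it sends $\boldbeta$ to $(k_1,h_1,k_2,h_2)^\top$ by the very derivation of \eqref{AE-to-kh-a}--\eqref{AE-to-kh-c} from \eqref{eqs:2linesTOconic}. The computation that makes this work is
\[
B^2-AC=\frac{C^2}{4}\bigl((k_1+k_2)^2-4k_1k_2\bigr)=\frac{C^2}{4}(k_1-k_2)^2>0,
\]
using \eqref{eqs:2linesTOconic}, $C\neq0$ and the assumption $k_1<k_2$. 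Consequently, on an event of probability one, for all large $n$ the radicand $\hat B^2-\hat A\hat C$ in \eqref{AE-to-kh-esta} stays positive, $\hat C$ stays bounded away from $0$, and $\hat k_2-\hat k_1$ stays bounded away from $0$; on that event all divisions in \eqref{AE-to-kh-esta}--\eqref{AE-to-kh-estc} are legitimate and the output depends continuously on $(\hat A,\hat B,\hat C,\hat D,\hat E)$. I would also check the two invariances that make the composed map well defined on lines: rescaling $\boldbeta\mapsto c\,\boldbeta$ leaves every ratio in \eqref{AE-to-kh-a}--\eqref{AE-to-kh-c} unchanged, and the sign flip $\boldbeta\mapsto-\boldbeta$ merely swaps the two roots $k_{1,2}$ and fixes $h_1,h_2$, so after imposing $\hat k_1<\hat k_2$ it changes nothing. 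The continuous mapping theorem in its almost-sure form then yields
\[
(\hat k_1,\hat h_1,\hat k_2,\hat h_2)^\top=\betatolines(\hatboldbeta)\;\longrightarrow\;\betatolines(\boldbeta)=(k_1,h_1,k_2,h_2)^\top\qquad\mbox{a.s.},
\]
which is the assertion.

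The only genuinely delicate point is the bookkeeping in the middle step: one must make sure the two-sided convergence in \eqref{eq:consHatbeta1} is harmless (handled by the sign invariance) and that the radicand and both denominators are eventually bounded away from their degenerate values (handled by the identity $B^2-AC=\tfrac14 C^2(k_1-k_2)^2$ together with $k_1<k_2$); everything else is routine. I would also remark that, unlike the functional case of Proposition~\ref{prop:ignF_func_cons}, no separate ``$\liminf\lambda_{\min}>0$'' hypothesis is needed here, because in the structural model its role is played by the exclusion of \eqref{nonindef2} and \eqref{nonindef1}, which is precisely what Proposition~\ref{prop:2lcALS2s} already invokes.
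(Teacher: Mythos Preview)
Your proposal is correct and follows exactly the paper's own approach: the paper's proof (stated jointly for Propositions~\ref{prop:ignF_func_cons}--\ref{prop:ignF_struct_AN}) simply says that consistency of the ``ignore-$\widehat F$'' estimator follows from the consistency of the ALS2 estimator $\hatboldbeta$ together with the continuity of $\betatolines(\boldbeta)$ at the true parameter. You have supplied the details the paper omits --- the identity $B^2-AC=\tfrac14 C^2(k_1-k_2)^2$, the scale invariance, and the handling of the sign ambiguity in \eqref{eq:consHatbeta1} --- though note that under $\boldbeta\mapsto-\boldbeta$ the intercepts $h_1,h_2$ swap along with $k_1,k_2$ rather than being individually fixed; your conclusion that the reordering $\hat k_1<\hat k_2$ restores everything is nonetheless correct.
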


Now, we state the asymptotic normality of the
``ignore-$\widehat F$'' estimator.

\begin{prop}\label{prop:ignF_func_AN}
In the functional model, assume the following:
\begin{itemize}
\item $k_1 < k_2$,
\item for $j=1, 2$ and $p=0, 1, \ldots, 6$, the following limits
exist and are finite:
\[
\mu_p^{(j)} := \lim_{n\to\infty}
\frac{1}n \sum_{\substack{
i=1,\ldots,n \\ \nu(i)=j}} \xi_i^p.
\]
\item for $j=1$ and $j=2$, the matrices
\[
\begin{pmatrix}
\mu_0^{(j)} & \mu_1^{(j)} & \mu_2^{(j)} \\[2pt]
\mu_1^{(j)} & \mu_2^{(j)} & \mu_3^{(j)} \\[2pt]
\mu_2^{(j)} & \mu_3^{(j)} & \mu_4^{(j)}
\end{pmatrix}
\]
are nonsingular.
\end{itemize}
Then the ``ignore-$\widehat{F}$'' estimator
$(\hat k_1, \hat h_1, \hat k_2, \hat h_2)^\top$
is asymptotically normal, namely
\begin{equation}
\label{eq:ano_khkh} \sqrt{n} \begin{pmatrix}
\hat k_1 - k_1 \\
\hat h_1 - h_1 \\
\hat k_2 - k_2 \\
\hat h_2 - h_2
\end{pmatrix} \distrto N\bigl(0, K \varSigma_{\tilde\beta}
K^\top\bigr),
\end{equation}
where $\varSigma_{\tilde\beta}$ is the asymptotic
covariance matrix of $\tilde\beta$, and
$K$ is the $4\times 6$ matrix of derivatives of the mapping
$(A, 2B, C, 2D, 2E, F)^\top \mapsto (k_1, h_1, k_2, h_2)^\top$
 defined in \eqref{AE-to-kh-a}--\eqref{AE-to-kh-c}
at the true parameters $\betanormaltrue$,
that is,
\[
K = \frac{{\rm d} \betatolines(\boldbeta)}{{\rm d}\boldbeta^\top} \bigg\rrvert _{\mbox{\scriptsize{$\boldbeta{=}\betanormaltrue$}}} .
\]
\end{prop}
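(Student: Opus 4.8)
\section*{Proof proposal}

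The plan is to derive \eqref{eq:ano_khkh} from part~2 of Proposition~\ref{prop:propALS2} by the multivariate delta method applied to the map $\betatolines$. I would proceed in three steps: (i) show that the hypotheses of the present proposition imply, in the functional model, the hypotheses of parts~1 and~2 of Proposition~\ref{prop:propALS2} (the moment conditions with $p+q\le 6$ and $\rank\Psiinf=5$), so that $\sqrt n(\widetilde\boldbeta-\betanormaltrue)$ is asymptotically normal with covariance matrix $\varSigma_{\tilde\beta}$ (the upper-left $6\times 6$ block of $\varSigma_{\hat\theta}$) and $\widetilde\boldbeta\to\betanormaltrue$ a.s.; (ii) show that $\betatolines$ is continuously differentiable in a neighbourhood of $\betanormaltrue$ and that $\betatolines(\betanormaltrue)=(k_1,h_1,k_2,h_2)^\top$; (iii) conclude by the delta method, using that the estimator equals $\betatolines(\widetilde\boldbeta)=\betatolines(\hatboldbeta)$ and that a.s.\ consistency of $\widetilde\boldbeta$ places it eventually inside the neighbourhood from (ii). Passing from the joint limit law of $(\widetilde\boldbeta^\top,\hat\sigma^2)^\top$ to the marginal one for $\widetilde\boldbeta$ is immediate.

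For step~(i), the moment conditions are translated by substituting $\eta_i=k_{\nu(i)}\xi_i+h_{\nu(i)}$: for $p+q\le 6$ the quantity $\xi_i^p\eta_i^q$ is a polynomial in $\xi_i$ of degree $\le 6$ whose coefficients depend only on $\nu(i)\in\{1,2\}$, so $\tfrac1n\sum_{i\le n}\xi_i^p\eta_i^q$ is a fixed linear combination of $\tfrac1n\sum_{i\le n:\,\nu(i)=j}\xi_i^r$ for $r=0,\dots,6$ and $j=1,2$, hence converges to some $\mu_{p,q}$ by the assumed existence of the $\mu_p^{(j)}$. In particular $\tfrac1n\sum_{i\le n}\xi_i^6$ is bounded, whence $\sum_n \xi_n^6/n^2<\infty$ by an Abel-summation estimate, so the consistency statements invoked are applicable once the rank condition is checked.

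For the rank, read off from the displayed formula for $\psi$ that $\psi(\xi,\eta;0)=w(\xi,\eta)\,w(\xi,\eta)^\top$ with $w(\xi,\eta)=(\xi^2,\xi\eta,\eta^2,\xi,\eta,1)^\top$, so $\Psiinf=\lim_n\tfrac1n\sum_i w_iw_i^\top=G_1+G_2$ with $G_j=\lim_n\tfrac1n\sum_{i\le n:\,\nu(i)=j}w_iw_i^\top\succeq 0$. On line $j$ one has $w(\xi,k_j\xi+h_j)=L_j(\xi^2,\xi,1)^\top$ for a fixed $6\times 3$ matrix $L_j$ whose rows $1,4,6$ form the identity, so $\rank L_j=3$ and $G_j=L_jM_jL_j^\top$, where $M_j=\lim_n\tfrac1n\sum_{i\le n:\,\nu(i)=j}(\xi_i^2,\xi_i,1)^\top(\xi_i^2,\xi_i,1)$ coincides, after reversing the order of coordinates, with the $3\times 3$ moment matrix assumed nonsingular, hence is positive definite. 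Since $\Psiinf\boldbeta=w\,(w^\top\boldbeta)=0$ on the true conic, $\rank\Psiinf\le 5$. Conversely, if $\Psiinf v=0$ then $v^\top G_1v=v^\top G_2v=0$, and $v^\top G_jv=(L_j^\top v)^\top M_j(L_j^\top v)$ with $M_j\succ0$ forces $L_j^\top v=0$, i.e.\ the quadratic form with coefficient vector $v$ vanishes identically on line $j$; holding this for $j=1,2$ and using $k_1<k_2$ (so the lines are distinct), polynomial division shows the form is proportional to $(\eta-k_1\xi-h_1)(\eta-k_2\xi-h_2)$, i.e.\ $v\in\operatorname{span}(\boldbeta)$. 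Hence $\dim\ker\Psiinf=1$ and $\rank\Psiinf=5$. This verification of the rank condition, via the factorization $\Psiinf=L_1M_1L_1^\top+L_2M_2L_2^\top$ and the fact that a conic through two distinct lines is their product, is the step I expect to be the main obstacle.

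Finally, for step~(ii): at $\betanormaltrue=(-\boldbeta^\top\dPsiinf\boldbeta)^{-1/2}\boldbeta$ one has $C\ne 0$ (the true conic is a genuine pair of non-vertical lines) and, by \eqref{eqs:2linesTOconic}, $B^2-AC=\tfrac14 C^2(k_1-k_2)^2>0$ since $k_1<k_2$; therefore, with the convention $\hat k_1<\hat k_2$, $\betatolines$ is a well-defined $C^\infty$ map on the open set $\{C>0,\ B^2-AC>0\}$ (or $\{C<0,\ B^2-AC>0\}$, according to $\operatorname{sgn}C$) containing $\betanormaltrue$, and a.s.\ $\widetilde\boldbeta$ eventually lies there. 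Substituting \eqref{eqs:2linesTOconic} into \eqref{AE-to-kh-a}--\eqref{AE-to-kh-c} (using that $\betanormaltrue$ is a positive multiple of $\boldbeta$) gives $\betatolines(\betanormaltrue)=\betatolines(\boldbeta)=(k_1,h_1,k_2,h_2)^\top$. The delta method applied to $\widetilde\boldbeta$ through $\betatolines$ then yields \eqref{eq:ano_khkh} with $K=\mathrm{d}\betatolines(\boldbeta)/\mathrm{d}\boldbeta^\top$ evaluated at $\betanormaltrue$.
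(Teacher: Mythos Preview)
Your proposal is correct and follows the same route as the paper: apply the delta method to $\betatolines$ using the asymptotic normality of $\widetilde{\boldbeta}$ from Proposition~\ref{prop:propALS2}. The paper's own proof is a two-line sketch (``follows from the asymptotic normality of $\tilde{\mathbi{\beta}}$ and the differentiability of $\betatolines(\mathbi{\beta})$ at the point $\betanormaltrue$''), so you have supplied details that the paper leaves implicit---in particular the reduction of the mixed moment conditions $\mu_{p,q}$ to the line-wise $\mu_p^{(j)}$ and the verification that $\rank\Psiinf=5$.

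One minor difference worth noting: your rank-$5$ argument (kernel computation via the factorization $\Psiinf=\sum_j L_jM_jL_j^\top$ and Bezout-type divisibility of the quadratic form) is not the device the paper uses for the analogous step. In the proof of Proposition~\ref{prop:2lcALS2f} the paper instead bounds $\lambda_{\min,2}(\overline{\varPsi}_n)$ from below by passing to a $5\times5$ principal submatrix and invoking the Cauchy interlacing theorem together with $\lambda_{\min}(K_1K_1^\top+K_2K_2^\top)>0$. Your argument is more self-contained and gives equality $\dim\ker\Psiinf=1$ directly; the paper's gives a quantitative lower bound on the second-smallest eigenvalue but requires the explicit determinant computation. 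Either suffices here. The Abel-summation remark about $\sum_n\xi_n^6/n^2$ is correct but not actually needed for part~2 of Proposition~\ref{prop:propALS2}; it is relevant only if you also want to re-derive strong consistency along the lines of Proposition~\ref{prop:ignF_func_cons}.
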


The matrix $K \varSigma_{\tilde\beta} K^\top$ is nonsingular.

\begin{prop}\label{prop:ignF_struct_AN}
If, in the structural model,
$k_1 < k_2$,
 $\ME \xi_1^6 < \infty$,
and neither \eqref{nonindef2} nor \eqref{nonindef1} holds,
then the ``ignore-$\widehat{F}$'' estimator is asymptotically normal,
that is,  \eqref{eq:ano_khkh} holds.
\end{prop}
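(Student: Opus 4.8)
The plan is to deduce the claim from the asymptotic normality of the conic ALS2 estimator $\widetilde{\boldbeta}$ (part~2 of Proposition~\ref{prop:propALS2}, structural case) by the delta method applied to the map $\betatolines$ of \eqref{AE-to-kh-a}--\eqref{AE-to-kh-c}. As noted in the text, $(\hat k_1,\hat h_1,\hat k_2,\hat h_2)^\top=\betatolines(\hatboldbeta)=\betatolines(\widetilde{\boldbeta})$, since $\betatolines$ is invariant under rescaling $\boldbeta\mapsto c\boldbeta$ ($c\neq0$) and does not depend on the $F$-coordinate. So it suffices to show (i) that $\sqrt{n}\,(\widetilde{\boldbeta}-\betanormaltrue)$ is asymptotically normal and (ii) that $\betatolines$ is continuously differentiable on a neighbourhood of $\betanormaltrue$.

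For (i) I would verify the hypotheses of part~2 of Proposition~\ref{prop:propALS2} in the structural model: $\ME\xi_1^6<\infty$, $\ME\eta_1^6<\infty$, and $\rank\Psiinf=5$. The bound on $\ME\eta_1^6$ follows from that on $\ME\xi_1^6$ because $\eta_1=k_{\nu(1)}\xi_1+h_{\nu(1)}$ with $\nu(1)\in\{1,2\}$, so $|\eta_1|\le\max(|k_1|,|k_2|)\,|\xi_1|+\max(|h_1|,|h_2|)$. The rank identity is the substantive step. With $\mathbi{u}(x,y)=(x^2,xy,y^2,x,y,1)^\top$ one checks that $\psi(x,y;0)=\mathbi{u}(x,y)\mathbi{u}(x,y)^\top$, whence in the structural model $\Psiinf=\ME[\mathbi{u}(\xi_1,\eta_1)\mathbi{u}(\xi_1,\eta_1)^\top]$ and $\ker\Psiinf$ is the set of coefficient vectors of conics whose zero set contains $\supp(\xi_1,\eta_1)$. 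The true pair of lines supplies one such vector, so $\rank\Psiinf\le5$ always, and $\rank\Psiinf=5$ is equivalent to: no conic other than a scalar multiple of the true pair of lines passes through $\supp(\xi_1,\eta_1)$. Suppose such a second conic $Q$ existed, and denote the true lines by $\ell_1,\ell_2$. If no true line is a component of $Q$, then $\supp(\xi_1,\eta_1)\subseteq Q\cap(\ell_1\cup\ell_2)$ has at most four points (a line meets a conic in at most two points unless it lies on it); then $\#\supp(\xi_1,\eta_1)\le3$ gives \eqref{nonindef2} (a line through two of the points together with the remaining one, if any) and $\#\supp(\xi_1,\eta_1)=4$ gives \eqref{nonindef1}. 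If some true line, say $\ell_1$, is a component of $Q$, then $Q=\ell_1\cup m$ with $m\neq\ell_2$, so $\supp(\xi_1,\eta_1)\subseteq\ell_1\cup(m\cap\ell_2)$ lies on a line together with at most one further point, which is \eqref{nonindef2}. Since by hypothesis neither \eqref{nonindef1} nor \eqref{nonindef2} holds, no such $Q$ exists, so $\rank\Psiinf=5$; this is the rank fact already used in proving Proposition~\ref{prop:2lcALS2s}. Part~2 of Proposition~\ref{prop:propALS2} then yields $\sqrt{n}\,(\widetilde{\boldbeta}-\betanormaltrue)\distrto N(0,\varSigma_{\tilde\beta})$, where $\varSigma_{\tilde\beta}$ is the leading $6\times6$ block of $\varSigma_{\hat\theta}$.

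For (ii), \eqref{eqs:2linesTOconic} shows that the true parameters satisfy $C\neq0$ and $B^2-AC=\tfrac{1}{4} C^2(k_1-k_2)^2>0$ (using $k_1<k_2$); these strict inequalities persist at $\betanormaltrue$, which is a nonzero multiple of the true $\boldbeta$, and on a neighbourhood of $\betanormaltrue$ on which the $C$-coordinate keeps a constant sign. On that neighbourhood the branch of the square root selected by the rule $\hat k_1<\hat k_2$ is a fixed real-analytic formula, so $\betatolines$ is continuously differentiable there and $\betatolines(\betanormaltrue)=(k_1,h_1,k_2,h_2)^\top$. Since $\widetilde{\boldbeta}\to\betanormaltrue$ almost surely (part~1 of Proposition~\ref{prop:propALS2}, again using $\rank\Psiinf=5$), the estimator eventually lies in this neighbourhood, and the delta method gives
\[
\sqrt{n}\bigl(\betatolines(\widetilde{\boldbeta})-(k_1,h_1,k_2,h_2)^\top\bigr)=K\,\sqrt{n}\,(\widetilde{\boldbeta}-\betanormaltrue)+o_P(1)\distrto N\bigl(0,K\varSigma_{\tilde\beta}K^\top\bigr),
\]
where $K={\rm d}\betatolines(\boldbeta)/{\rm d}\boldbeta^\top$ at $\boldbeta=\betanormaltrue$, the same matrix as in Proposition~\ref{prop:ignF_func_AN}. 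This is exactly \eqref{eq:ano_khkh}.

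The step I expect to be the main obstacle is the rank identity $\rank\Psiinf=5$: turning the exclusions \eqref{nonindef1}--\eqref{nonindef2} into one-dimensionality of $\ker\Psiinf$ uses the elementary line--conic intersection bound together with a check of the degenerate configurations (all support on one line, support with at most three points, three of four support points collinear, a double line, a pair of complex-conjugate lines), each of which must be seen to fall under \eqref{nonindef2} or \eqref{nonindef1}. The remaining ingredients---the moment bookkeeping, the smoothness of $\betatolines$ at $\betanormaltrue$, and the delta-method step---are routine. The nonsingularity of $K\varSigma_{\tilde\beta}K^\top$ noted after Proposition~\ref{prop:ignF_func_AN} is not needed for \eqref{eq:ano_khkh} and would be argued exactly as there.
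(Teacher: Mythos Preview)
Your proposal is correct and follows essentially the same approach as the paper: asymptotic normality of $\widetilde{\boldbeta}$ from Proposition~\ref{prop:propALS2} plus differentiability of $\betatolines$ at $\betanormaltrue$, combined via the delta method. The paper's proof is terse and does not spell out the verification of the hypotheses of Proposition~\ref{prop:propALS2} (in particular $\rank\Psiinf=5$), whereas you supply the line--conic intersection argument explicitly; that argument is the same geometric reasoning the paper already records in its proof of Proposition~\ref{prop:2lcALS2s}.
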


\begin{remark}
The estimators $\hat k_1$, $\hat h_1$, $\hat k_2$, and $\hat h_2$ obtained in
\eqref{AE-to-kh-esta}--\eqref{AE-to-kh-estc}
do not change if $\hat A$, $\widehat B$, \ldots, $\widehat E$
are multiplied by a common factor.
So it does not matter which normalization of $\boldbeta$ is used.
\end{remark}

Equation~\eqref{eq:conicxieta} represents a couple of intersecting straight lines
if and only if
\begin{equation}
\begin{vmatrix} A & B & D
\\
B & C & E
\\
D & E & F \end{vmatrix} = 0 \quad \mbox{and} \quad A C < B^2 .
\end{equation}
Denote
\begin{gather*}
\Delta(\boldbeta) = \begin{vmatrix} A & B & D
\\
B & C & E
\\
D & E & F \end{vmatrix} = A C F + 2 B D E - A E^2 - C
D^2 - B^2 F,
\\
\Delta'(\boldbeta) = \frac{{\rm d} \Delta(\boldbeta)}{
{\rm d}\boldbeta^\top} = \bigl(CF{-}E^2,
DE{-}BF, AF{-}D^2, BE{-}CD, BD{-}AE, AC{-}B^2\bigr),
\end{gather*}
where the function $\Delta(\boldbeta)$ and its derivative
$\Delta'(\boldbeta)$ are evaluated at the point
$\boldbeta = (A, 2B, C, 2D, 2E, F)^\top$.

Perform one-step update of the estimator $\widetilde{\boldbeta}$ to make
it closer to the surface\break $\Delta(\boldbeta)=0$:
\begin{equation}
\label{eq:onestepupdate} \widetilde{\boldbeta}_{\rm 1st} = \widetilde{\boldbeta} -
\frac{\Delta(\widetilde{\boldbeta})}{
\Delta'(\widetilde{\boldbeta}) \widehat{\varSigma}_{\tilde\beta}
\Delta'(\widetilde{\boldbeta})^\top} \widehat{\varSigma}_{\tilde\beta} \Delta'(
\widetilde{\boldbeta})^\top.
\end{equation}
Then use expressions \eqref{AE-to-kh-a}--\eqref{AE-to-kh-c}
to estimate $k_1, h_1, k_2, h_2$:
\[
(\hat {k}_{1,{\rm 1st}}, \hat {h}_{1,{\rm 1st}}, \hat {k}_{2,{\rm 1st}}, \hat
{h}_{2,{\rm 1st}})^\top = \betatolines(\widetilde{
\boldbeta}_{\rm 1st}) .
\]

\begin{prop}\label{prop:can_1st}
Under the conditions of Proposition~\ref{prop:ignF_func_AN}
in the functional model or under the conditions of
Proposition~\ref{prop:ignF_struct_AN} in the structural model,
the estimator
$(\hat k_{1,{\rm 1st}},
 \hat h_{1,{\rm 1st}},\break
 \hat k_{2,{\rm 1st}},
 \hat h_{2,{\rm 1st}})^\top$
is consistent and asymptotically normal, and its
asymptotic covariance matrix is equal to
\[
K \biggl( \varSigma_{\tilde\beta} - \frac
{\varSigma_{\tilde\beta} \Delta'(\betanormaltrue)^\top
\Delta'(\betanormaltrue) \varSigma_{\tilde\beta}}{
\Delta'(\betanormaltrue)
\varSigma_{\tilde\beta} \Delta'(\betanormaltrue)^\top} \biggr) K^\top
.
\]
\end{prop}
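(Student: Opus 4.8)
The plan is to read \eqref{eq:onestepupdate} as the standard one-step correction scheme and to combine the asymptotic normality of $\widetilde{\boldbeta}$ from Proposition~\ref{prop:propALS2} with the delta method for the rational map $\betatolines$. First I would record the facts that make the scheme run. Because the hypotheses include $k_1<k_2$, the true conic \eqref{eq:conicxieta} with coefficients \eqref{eqs:2linesTOconic} is a genuine couple of intersecting straight lines, so $\Delta(\betanormaltrue)=0$; moreover, the last component of $\Delta'(\betanormaltrue)$ equals $AC-B^2<0$, so $\Delta'(\betanormaltrue)\neq 0$. The same nondegeneracies ($C\neq 0$, $B^2-AC>0$, $k_2-k_1\neq 0$) show that $\betatolines$, defined by \eqref{AE-to-kh-a}--\eqref{AE-to-kh-c}, is continuously differentiable in a neighbourhood of $\betanormaltrue$ with Jacobian $K$ there, and that $\betatolines(\betanormaltrue)=(k_1,h_1,k_2,h_2)^\top$. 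The conditions of Proposition~\ref{prop:ignF_func_AN} (functional model) or of Proposition~\ref{prop:ignF_struct_AN} (structural model) entail those of Proposition~\ref{prop:propALS2}, so $\widetilde{\boldbeta}\to\betanormaltrue$ a.s., $\sqrt n\,(\widetilde{\boldbeta}-\betanormaltrue)\distrto N(0,\varSigma_{\tilde\beta})$, and the estimator $\widehat{\varSigma}_{\tilde\beta}$ appearing in \eqref{eq:onestepupdate} satisfies $\widehat{\varSigma}_{\tilde\beta}\to\varSigma_{\tilde\beta}$ in probability. Put $g=\Delta'(\betanormaltrue)^\top$ and $c=g^\top\varSigma_{\tilde\beta}g$.

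Consistency is then immediate: by continuity $\Delta(\widetilde{\boldbeta})\to\Delta(\betanormaltrue)=0$ a.s., $\Delta'(\widetilde{\boldbeta})\to g^\top$, and the denominator in \eqref{eq:onestepupdate} tends in probability to $c$; assuming $c>0$ (discussed below), the vector $\widehat{\varSigma}_{\tilde\beta}\Delta'(\widetilde{\boldbeta})^\top / \bigl(\Delta'(\widetilde{\boldbeta})\widehat{\varSigma}_{\tilde\beta}\Delta'(\widetilde{\boldbeta})^\top\bigr)$ stays bounded in probability, so the whole correction term tends to $0$ and $\widetilde{\boldbeta}_{\rm 1st}\to\betanormaltrue$ in probability; applying the continuous map $\betatolines$ gives consistency of $(\hat k_{1,{\rm 1st}},\hat h_{1,{\rm 1st}},\hat k_{2,{\rm 1st}},\hat h_{2,{\rm 1st}})^\top$.

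For asymptotic normality I would Taylor-expand $\Delta$ at $\betanormaltrue$, using $\Delta(\betanormaltrue)=0$ and $\widetilde{\boldbeta}-\betanormaltrue=O_P(n^{-1/2})$, to obtain $\Delta(\widetilde{\boldbeta})=g^\top(\widetilde{\boldbeta}-\betanormaltrue)+o_P(n^{-1/2})$. Substituting this into \eqref{eq:onestepupdate} and using $\Delta'(\widetilde{\boldbeta})\to g^\top$ and $\widehat{\varSigma}_{\tilde\beta}\to\varSigma_{\tilde\beta}$, Slutsky's lemma gives
\[
\sqrt n\,(\widetilde{\boldbeta}_{\rm 1st}-\betanormaltrue)=P\,\sqrt n\,(\widetilde{\boldbeta}-\betanormaltrue)+o_P(1)\distrto N\bigl(0,\,P\varSigma_{\tilde\beta}P^\top\bigr),\qquad P:=I-\frac{\varSigma_{\tilde\beta}\,g\,g^\top}{c}.
\]
A short computation, in which the two cross terms and the quadratic term combine by virtue of $g^\top\varSigma_{\tilde\beta}g=c$, yields $P\varSigma_{\tilde\beta}P^\top=\varSigma_{\tilde\beta}-\varSigma_{\tilde\beta}\,g\,g^\top\varSigma_{\tilde\beta}/c$, which is precisely the matrix inside the parentheses of the asserted covariance. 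Finally, the delta method applied to $\betatolines$ at $\betanormaltrue$, together with $\betatolines(\betanormaltrue)=(k_1,h_1,k_2,h_2)^\top$ and $\betatolines(\widetilde{\boldbeta}_{\rm 1st})=(\hat k_{1,{\rm 1st}},\hat h_{1,{\rm 1st}},\hat k_{2,{\rm 1st}},\hat h_{2,{\rm 1st}})^\top$, turns the last display into the claimed convergence with asymptotic covariance $K\bigl(\varSigma_{\tilde\beta}-\varSigma_{\tilde\beta}g\,g^\top\varSigma_{\tilde\beta}/c\bigr)K^\top$.

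The main obstacle is the one assumption used above, namely that $c=\Delta'(\betanormaltrue)\varSigma_{\tilde\beta}\Delta'(\betanormaltrue)^\top>0$, equivalently $\Delta'(\betanormaltrue)^\top\notin\ker\varSigma_{\tilde\beta}$; without it the correction term in \eqref{eq:onestepupdate} need not be asymptotically well behaved and the Slutsky step fails. I would establish it from the explicit form of $\varSigma_{\tilde\beta}$ provided by Proposition~\ref{prop:propALS2}: since the normalization $\widetilde{\boldbeta}{}^\top{\boldsymbol\varPsi}'_n(\hat\sigma^2)\widetilde{\boldbeta}=-n$ is imposed through the \emph{random} matrix ${\boldsymbol\varPsi}'_n(\hat\sigma^2)$, the component of $\sqrt n\,(\widetilde{\boldbeta}-\betanormaltrue)$ along $\dPsiinf\betanormaltrue$ does not collapse asymptotically, so $\varSigma_{\tilde\beta}$ carries no spurious null direction, and then $c>0$ follows from $\Delta'(\betanormaltrue)\neq 0$. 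Apart from this point, the argument is the routine combination of a first-order expansion, Slutsky's lemma and the delta method.
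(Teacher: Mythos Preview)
Your proposal is correct and follows essentially the same route as the paper's proof: consistency and asymptotic normality of $\widetilde{\boldbeta}$, differentiability of $\Delta$ at $\betanormaltrue$ with $\Delta(\betanormaltrue)=0$, convergence of the correction vector in \eqref{eq:onestepupdate} to its population analogue, and then the delta method through $\betatolines$. You are in fact more thorough than the paper---you explicitly carry out the Taylor expansion, compute $P\varSigma_{\tilde\beta}P^\top$, and flag the positivity of $c=\Delta'(\betanormaltrue)\varSigma_{\tilde\beta}\Delta'(\betanormaltrue)^\top$, which the paper simply uses without comment.
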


\begin{remark}
The normalization of the estimator $\hatboldbeta$ affects its asymptotic
covariance matrix, and hence has effect on the estimates
$(\hat k_{1,{\rm 1st}},
 \hat h_{1,{\rm 1st}},
 \hat k_{2,{\rm 1st}},
 \hat h_{2,{\rm 1st}})^\top$.
However, the normalization does not affect the asymptotic covariance matrix
of
$(\hat k_{1,{\rm 1st}},
 \hat h_{1,{\rm 1st}},
 \hat k_{2,{\rm 1st}},\break
 \hat h_{2,{\rm 1st}})^\top$.
\end{remark}

\subsection{Orthogonal regression estimator}
The sum of squared distances between each observed point
and the closer of two lines is equal to
\begin{equation}
\label{eq:ORQf} Q(k_1, h_1, k_2,
h_2) = \sum_{i=1}^n \min
\biggl( \frac{(y_i - k_1 x_i - h_1)^2}{k_1^2 + 1}, \, \frac{(y_i - k_2 x_i - h_2)^2}{k_2^2 + 1} \biggr).
\end{equation}
The \emph{orthogonal regression} estimator is a Borel-measurable function
of observations such that
\[
(\hat k_{1,\rm OR}, \hat h_{1,\rm OR}, \hat k_{2,\rm OR}, \hat
h_{2,\rm OR}) \in \mathop{\argmax}\limits
_{(k_1, h_1, k_2, h_2) \in \mathbb{R}^4} Q(k_1, h_1,
k_2, h_2) .
\]

In the functional model, the orthogonal regression estimator is
the maximum likelihood estimator. However, because the dimension of
parameter space grows as the sample size is increasing,
the orthogonal regression estimator may be inconsistent.

\subsection{Parametric maximum likelihood estimator}\label{ss:PML}
The estimator is constructed in the structural model,
so it should be called the structural maximum likelihood estimator.

If a Gaussian distribution of a random point $(\xi, \eta)$
is concentrated on a~straight line $\eta = k \xi + h$,
then it is a singular normal distribution:
\begin{equation}
\label{q122a} ( \xi, \eta ) \sim N \left( \begin{pmatrix} \mu_\xi \\ k \mu_\xi + h \end{pmatrix}, \, \begin{pmatrix}
\sigma_\xi^2   & k \sigma_\xi^2 \\
k \sigma_\xi^2 & k^2 \sigma_\xi^2
\end{pmatrix} \right),
\end{equation}
where $\mu_\xi$ and $\sigma_\xi^2$ are the expectation and variance
of the random variable $\xi$.
Note that the covariance matrix
$\sigma^2_\xi  \big(\begin{smallmatrix}
1 & k \\ k & k^2 \end{smallmatrix} \big)$
is singular and positive semidefinite.

If the distribution of a random point $(\xi_i, \eta_i)$
is concentrated on two straight
lines $\eta = k_1 \xi + h_1$ and $\eta = k_2 \xi + h_2$
and the distribution on each line is Gaussian, then, due to
\eqref{q122a}, the conditional distributions are
\begin{equation*}
%\label{q122a}
 \bigl[ \begin{pmatrix} \xi_i, \eta_i \end{pmatrix} \; \mid \; \! \nu(i) = j \bigr] \sim N \left( \begin{pmatrix} \mu_{j\xi} \\ k_j \mu_{j\xi} + h_j \end{pmatrix},
\, \begin{pmatrix}
\sigma_{j\xi}^2   & k_j \sigma_{j\xi}^2 \\
k_j \sigma_{j\xi}^2 & k^2_j \sigma_{j\xi}^2
\end{pmatrix} \right) = N(\boldmu_j, \varSigma_{0j})
\end{equation*}
for $j=1,\,2$.
The matrices $\varSigma_{0j}$ are positive semidefinite and singular,
that is,\break $\lambda_{\min}(\varSigma_{01}) = \lambda_{\min}(\varSigma_{02}) = 0$,
and the points $\boldmu_j$ are
the centers of Gaussian distribution of the points on each line.

The distribution of $(\xi_i,\eta_i)$
is a mixture of two singular normal distributions
\begin{equation}
\label{q122b} \begin{pmatrix} \xi_i \\ \eta_i \end{pmatrix} \sim \mbox{mixture of} \begin{cases}
N(\boldmu_1, \varSigma_{01}) & \mbox{with weight $p$}, \\
N(\boldmu_2, \varSigma_{02}) & \mbox{with weight $1-p$}, \\
\end{cases}
\end{equation}
where $p = \Prob(\nu(i)=1) = \Prob(\nu(1)=1)$ is the probability
that the point $(\xi_i,\eta_i)$ lies of the first line.

The distribution of the observed points is also a mixture of
two Gaussian distributions
\begin{equation}
\label{q122c} \begin{pmatrix} x_i \\ y_i \end{pmatrix} \sim \mbox{mixture of} \begin{cases}
N(\boldmu_1, \varSigma_1) & \mbox{with weight $p$}, \\
N(\boldmu_2, \varSigma_2) & \mbox{with weight $1-p$}, \\
\end{cases}
\end{equation}
with $\varSigma_j = \varSigma_{0j} + \sigma^2 {\bf I}$,
where $\sigma^2$ is the error variance; see \eqref{q11c}.
Note that $\lambda_{\min} (\varSigma_1) =
\lambda_{\min} (\varSigma_2) = \sigma^2$.

The likelihood function for the sample of points with
a mixture of two normal distributions is
\begin{equation}
\label{eq:PMLLf} L(p, \boldmu_1, \varSigma_1,
\boldmu_2, \varSigma_2) = \prod
_{i=1}^n \bigl(p \phi_{N(\mu_1,\varSigma_1)}(x_i,y_i)
+ (1{-}p) \phi_{N(\mu_1,\varSigma_1)}(x_i,y_i)\bigr),
\end{equation}
where
\[
\phi_{N(\mu,\varSigma)}(x,y) = \frac{1}{2 \pi \sqrt{\det \varSigma}} \exp \biggl\{ -\frac{1}2
\left( \begin{pmatrix} x\\y \end{pmatrix} - \mu \right)^\top \varSigma^{-1} \left( \begin{pmatrix} x\\y \end{pmatrix} - \mu \right)
\biggr\}
\]
is the density of a bivariate normal distribution.

One method of evaluating the maximum likelihood estimator
is as follows:
\begin{enumerate}
\item
Find the point of conditional minimum
\[
(\widehat{\boldmu}_1, \widehat{\varSigma}_1, \widehat{
\boldmu}_2, \widehat{\varSigma}_2) = \mathop{\argmin}
\limits_{\substack{
 \mu_1, \varSigma_1, \mu_2, \varSigma_2\\
 {\rm such\ that\ } \lambda_{\min} (\varSigma_1) = \lambda_{\min} (\varSigma_2)}}
\min_{p\in[0, 1]} L(p, \boldmu_1, \varSigma_1,
\boldmu_2, \varSigma_2) .
\]

\item
Set
\begin{equation}
\label{eq:PMLhvarer} \hat\sigma^2 = \lambda_{\min} (
\varSigma_1) = \frac{1}2 \bigl( \hat\sigma_{1xx} +
\hat\sigma_{1yy} - \sqrt{(\hat\sigma_{1xx}-\hat
\sigma_{1yy})^2 + 4\hat\sigma^2_{1xy}}\,
\bigr).
\end{equation}
Here $\hat\sigma_{jxx}$, $\hat\sigma_{jxy}$, and $\hat\sigma_{jyy}$ are
the entries of the matrix $\widehat\varSigma_j$,
and $\hat\mu_{jx}$ and $\hat\mu_{jy}$ are the elements of the vector
$\widehat{\boldmu}_j$:
\[
\widehat{\boldmu}_j = \begin{pmatrix}
\hat\mu_{jx} \\ \hat\mu_{jy}
\end{pmatrix}, \qquad \widehat{
\varSigma}_j = \begin{pmatrix}
\hat\sigma_{jxx} & \hat\sigma_{jxy} \\
\hat\sigma_{jxy} & \hat\sigma_{jyy}
\end{pmatrix}.
\]

\item
Find the estimates $\hat k_1$, $\hat h_1$, $\hat k_2$,
$\hat h_2$ from the equations
\[
\widehat{\boldmu}_j = \begin{pmatrix}
\mu_{jx} \\
\hat k_j \mu_{jx} + \hat h_j
\end{pmatrix}, \qquad \widehat{
\varSigma}_j = \begin{pmatrix}
\hat\sigma_{j\xi}^2 + \hat\sigma^2 & \hat k_j \hat\sigma_{j\xi}^2 \\
\hat k_j \hat\sigma_{j\xi}^2 & \hat k_j^2 \hat\sigma_{j\xi}^2 + \hat\sigma^2
\end{pmatrix},
\]
that is, set
\begin{gather}
\label{eq:PMLhkj} \hat k_j = \frac{\hat\sigma_{jxy}}{\hat\sigma_{jxx}-\hat\sigma^2},
\\
\label{eq:PMLhhj} \hat h_j = \hat \mu_{jy} - \hat
k_j \hat \mu_{jx} .
\end{gather}
\end{enumerate}

The denominator $\hat\sigma_{jxx} - \hat\sigma^2$ may be equal to 0
with some positive probability. Occurrence of this event  means
that the estimated figure is a straight line and a single point
outside the line rather than two straight lines.

In order to make the statement of consistency easier,
assume that $k_1 < k_2$ and choose the estimator such that
$\hat k_1 \le \hat k_2$.

\subsection{RBAN moment estimator}\label{ss:RBANmome}
The \emph{regular best asymptotically normal} (RBAN) estimators
were developed by Chiang \cite{Chiang1956}.
Our RBAN moment estimator differs from the original RBAN
so that not only the observed points $(x_i, y_i)$,
but also monomials $x_i^p y_i^q$, $p+q\le 4$, are averaged.

Introduce the 14-dimensional vectors whose elements are
the monomials of coordinates of observed points:
\begin{align}
\label{eq:RMANmxy}m(x,y) &= \bigl(x^4, x^3 y, x^2 y^2, x y^3, y^4, x^3, x^2 y, x y^2, y^3,  x^2, x y, y^2, x, y\bigr)^\top,\\
\nonumber \mbox{\boldmath$m$}_i &= m(x_i, y_i).
\end{align}

Evaluate the average and sample covariance matrix
of the vectors $\mbox{\boldmath$m$}_i$:
\[
\overline{\mbox{\boldmath$m$}} = \frac{1}n \sum
_{i=1}^n \mbox{\boldmath$m$}_i, \qquad
\varSigma_m = \frac{1}n \sum_{i=1}^n
(\mbox{\boldmath$m$}_i - \overline{\mbox{\boldmath$m$}}) (\mbox{
\boldmath$m$}_i - \overline{\mbox{\boldmath$m$}})^\top.
\]

Denote
\[
f_1\bigl(k, h, \sigma^2; (\mu_p)_{q=1}^4
\bigr) = \ME m(\xi+\delta, k \xi + h + \varepsilon),
\]
where $\xi$,  $\delta$, and $\varepsilon$
are independent random variables such that
\[
\ME \xi^q = \mu_q \quad \mbox{and} \quad (\delta,\varepsilon)^\top {\sim}\break N(0, \sigma^2 {\bf I}).
\]
Basically, the function $f_1$ is defined for all
$\mu_p$, $p=1,\ldots,4$,
that comprise possible 4-tuples of moments of a random variable,
that is,  satisfy
\begin{align*}
&\qquad \qquad \qquad \qquad \qquad \qquad \qquad \mu_2 - \mu_1^2 \ge 0,&\\
&\bigl(\mu_4 \,{-}\, 4 \mu_3 \mu_1 \,{+}\, 6\mu_2 \mu_1^2 \,{-}\, 3 \mu_1^4\bigr) \bigl(\mu_2 \,{-}\, \mu_1^2\bigr) \,{-}\, \bigl(\mu_3 \,{-}\, 3 \mu_2 \mu_1 \,{+}\, 2\mu_1^3\bigr)^2 \,{-}\, \bigl(\mu_2 {-}\,\mu_1^2\bigr)^3 \,{\ge}\, 0;&
\end{align*}
see \cite{Rohatgi1989}\@.
However, since the elements of the vector-function $f_1$
are polynomials of its arguments,
it can be extended to $\mathbb{R}^{7}$.

Denote
\begin{align*}
&f_2 \bigl(k_1, h_1, k_2,
h_2, \sigma^2; p, \bigl(\mu_q^{(j)}
\bigr)_{j=1,q=1}^{2\mskip28mu 4} \bigr)
\\
&\quad = p f_1 \bigl(k_1, h_1,
\sigma^2; \bigl(\mu_q^{(1)}\bigr)_{q=1}^4
\bigr) + (1{-}p) f_1 \bigl(k_2, h_2,
\sigma^2; \bigl(\mu_q^{(2)}\bigr)_{q=1}^4
\bigr).
\end{align*}

In the structural model,
\[
\ME \mbox{\boldmath{$m$}}_i = f_2 \bigl(k_1,
h_1, k_2, h_2, \sigma^2; \Prob
\bigl(\nu(1){=}1\bigr), \bigl(\ME\bigl[\xi_1^q \mid
\nu(1){=}j\bigr] \bigr)_{j=1,q=1}^{2\mskip28mu 4} \bigr).
\]

Consider the equation
\begin{equation}
\label{eq:momest1} f_2 \bigl(\hat k_1, \hat
h_1, \hat k_2, \hat h_2, \hat
\sigma^2; \hat p, \bigl(\hat\mu_q^{(j)}
\bigr)_{j=1,q=1}^{2\mskip28mu 4} \bigr) = \overline{\mbox{\boldmath$m$}} .
\end{equation}
It is a system of 14 equations in 14 variables.
\xch{If}{If Eq.}~\eqref{eq:momest1} has a solution,
then the moment estimator can be defined as one of the solutions.
\xch{However,}{However, Eq.}~\eqref{eq:momest1} may have no solution.

In the rest of Section~\ref{ss:RBANmome},
$\mu_{\bullet}^{(\bullet)} = (\mu_{q}^{(j)})_{j=1,q=1}^{2\mskip28mu 4}$
is a $2\times 4$ matrix.

The estimator is defined as a point where
$(f_2(\ldots) - \overline{\mbox{\boldmath$m$}})^\top \varSigma_m^{-1}
 (f_2(\ldots) - \overline{\mbox{\boldmath$m$}})$
 attains its minimum:
\begin{align*}
\bigl(&\hat k_1, \hat h_1, \hat k_2, \hat
h_2, \hat\sigma^2\bigr)
\\
&= \mathop{\argmin}\limits
_{k_1,\ldots,\sigma^2} \min_{p,\mu_{\bullet}^{(\bullet)}} \bigl(f_2
\bigl(k_1,\ldots,\sigma^2; p, \mu_{\bullet}^{(\bullet)}
\bigr) \,{-}\, \overline{\mbox{\boldmath$m$}}\bigr)^\top \varSigma_m^{-1}
\bigl(f_2\bigl(k_1,\ldots,\sigma^2; p,
\mu_{\bullet}^{(\bullet)}\bigr) \,{-}\, \overline{\mbox{\boldmath$m$}}\bigr) .
\end{align*}
This minimization problem is similar to that
in Theorem 6 in \cite{Chiang1956}.
The minimum
\[
\min_{p \in \mathbb{R}} \min_{\mu_{\bullet}^{(\bullet)} \in \mathbb{R}^{2\times 4}} \bigl(f_2
\bigl(k_1,\ldots,\sigma^2; p, \mu_{\bullet}^{(\bullet)}
\bigr) - \overline{\mbox{\boldmath$m$}}\bigr)^\top \varSigma_m^{-1}
\bigl(f_2\bigl(k_1,\ldots,\sigma^2; p,
\mu_{\bullet}^{(\bullet)}\bigr) - \overline{\mbox{\boldmath$m$}}
\bigr)^\top
\]
can be evaluated explicitly, and this allows us to reduce the dimension
of minimization problem.
The reduction of dimension of the optimization problem was used, for example,
in \cite{Markovsky2004STLS}.

The routines evaluating the RBAN-moment estimator and
the estimator for its covariance matrix
are developed without rigid theoretical basis;
see Section~\ref{ss:RBAN-momentEval}.

\section{Equivariance}\label{sec:equivariance}
\subsection{Two definitions of equivariance}
The similarity transformation of $\mathbb{R}^2$ is
\begin{equation}
g(\mathbi{z}) = K \mathbf{U} \mathbi{z} + \Delta\mathbi{z}, \quad \mathbi{z}
\in \mathbb{R}^2, \label{eq:equivtransform}
\end{equation}
where $\mathbf{U}$ is an orthogonal matrix,
$K \neq 0$ is a scaling coefficient, and
$\Delta\mathbi{z} \in \mathbb{R}^2$ is an intercept.

The transformation of a sample of points acting elementwise
is also denoted $g(Z)$:
if $Z = \{\mathbi{z}_i, \; i=1,\ldots,n\}$, then
$g(Z) = \{g(\mathbi{z}_i), \; i=1,\ldots,n\}$.

Hereafter, we use  vector notation:
the observed points are denoted $\mathbi{z}_i = (x_i, y_i)^\top$,
and the true points are denoted $\mathbi{\zeta}_i = (\xi_i, \eta_i)^\top$.

The underlying statistical structure is
$(\mathbb{R}^{n\times 2}, \mathcal{B}(\mathbb{R}^{n\times 2}),
P_{Z\mid \theta}, \theta \in \varTheta)$,
where $Z \in \mathbb{R}^{n\times 2}$ is the observed sample,
$Z = \{\mathbi{z}_i, \; i=1,\ldots,n\}$,
$\mathcal{B}(\mathbb{R}^{n\times 2})$ is the Borel $\sigma$-field,
and $\theta$ is a parameter that uniquely identifies the distribution of
the observed points;
$\theta = (\mathbi{\zeta}_1, \ldots, \mathbi{\zeta}_n; \sigma^2)$
in the functional model,
and
$\theta = (P_{\zeta}; \sigma^2)$
in the structural model.
Here $\mathbi{\zeta}_1, \ldots, \mathbi{\zeta}_n$ are points
located on two strait lines,
and $P_{\zeta}$ is a probability measure
concentrated on two straight lines.

The statistical structure is invariant with respect
to transformation
$g$ if the change of the probability measure induced
by the transformation of the sample can be obtained
by some transformation $\tilde g$ of parameters,
that is, if there exists a bijection $\tilde g : \varTheta \to \varTheta$
such that
\[
\forall \theta \in \varTheta : \quad P_{g(Z) \mid \theta} = P_{Z \mid \tilde g(\theta)} .
\]
Here $P_{g(Z) \mid \theta}$ is the induced probability measure;
it is sometimes denoted $P_{g(Z) \mid \theta} = P_{Z \mid \theta} g^{-1}$.

The statistical structure is similarity invariant if
it is invariant with respect to all similarity
transformations of the form~\eqref{eq:equivtransform}.

In order to become similarity invariant,
the underlying statistical structure needs some extension.
We assume that the true points lie on two lines,
which \emph{may\/} be parallel to the $y$-axis.
The following restrictions do not ruin the invariance:
\begin{itemize}
\item The true lines $\ell_1$ and $\ell_2$
intersect each other but do not coincide.
\item The true points $\mathbi{\zeta}_1\ldots,\allowbreak\mathbi{\zeta}_n$
in the functional model or
the set $\supp(P_{\zeta})$ where the true points are concentrated
in the structural model
can be covered with two lines uniquely.
In the structural model, this means that
the nonidentifiability conditions \eqref{nonindef2}
and \eqref{nonindef1} do not hold.
\end{itemize}

With these restrictions, the statistical structure
is invariant with
\[
\tilde g\bigl(\mathbi{\zeta}_1, \ldots, \mathbi{
\zeta}_n; \sigma^2\bigr) = \bigl(g(\mathbi{
\zeta}_1), \ldots, g(\mathbi{\zeta}_n); K^2
\sigma^2\bigr)
\]
in the functional model and
\[
\tilde g\bigl(P_{\zeta}; \sigma^2\bigr) =
\bigl(P_{g(\zeta)}; K^2 \sigma^2\bigr)
\]
in the structural model.

Let $\ellell(\theta)$ and $\sigmatwo(\theta)$
be functions that extract
the parameters of interest.
If, in the functional model,
$\theta=(\mathbi{\zeta}_1, \ldots, \mathbi{\zeta}_n; \sigma^2)$
and points $\xch{\mathbi{\zeta}_1}{{\zeta}_1}, \ldots,\allowbreak  \mathbi{\zeta}_n$
lie on the lines $\ell_1$ and $\ell_2$
or if, in the structural model,
$\theta=(P_{\zeta}; \sigma^2)$
and the probability measure is concentrated
on the union of two lines $\ell_1 \cup \ell_2$,
then $\ellell(\theta) = \{\ell_1, \ell_2\}$.
If $\theta = (\ldots; \sigma^2)$, then $\sigmatwo(\theta) = \sigma^2$.

We treat $\{\ell_1, \ell_2\}$ as an unordered couple,
that is, $\{\ell_1, \ell_2\} = \{\ell_2, \ell_1\}$.

The transformation of the lines parameters and
the transformation of $\sigma^2$ do not
interfere each other, and these transformations
are not interfered
by a~particular location or distribution
of true points on the lines, that is,
the parameter transformation $\tilde g$
is such that there exists transformations
$\tilde g_{\ellell}$ and $\tilde g_{\sigma^2}$
such that
\begin{gather*}
\ellell\bigl( \tilde g (\theta)\bigr) = \tilde g_{\ellell} \bigl(\ellell(
\theta)\bigr),
\\
\sigmatwo\bigl( \tilde g (\theta)\bigr) = \tilde g_{\sigma^2} \bigl(
\sigmatwo(\theta)\bigr).
\end{gather*}
These transformations are
\begin{gather}
\label{eq:fungellell} \tilde g_{\ellell} \bigl(\{\ell_1,
\ell_2\}\bigr) = \bigl\{ g(\ell_1), g(\ell_2)
\bigr\},
\\
\nonumber
\tilde g_{\sigma^2} \bigl(\sigma^2\bigr) =
K^2 \sigma^2.
\end{gather}

The estimator is called equivariant with respect to the transformation $g$
if, when the data are transformed, the estimator follows the inducing
transformation of parameters. The estimator $\hatellell(Z)$
for two lines and the estimator $\hat\sigma^2(Z)$
for error variance are equivariant
with respect to similarity transformation $g$ if
\begin{gather}
\label{eq:defevll} \hatellell\bigl(g(Z)\bigr) = \tilde g_{\ellell} \bigl(
\hatellell(Z)\bigr),
\\
\label{eq:defevs} \hat\sigma^2\bigl(g(Z)\bigr) = \tilde
g_{\sigma^2} \bigl(\hat\sigma^2(Z)\bigr) .
\end{gather}
The estimator is called similarity equivariant if it is
equivariant with
respect to any similarity transformation $g$.

In a fitting problem, an estimator for a ``true figure''
is called \emph{fitting equivariant\/} with respect to
transformation $g(\mathbi{z})$, $g: \mathbb{R}\to\mathbb{R}$ if,
when the sample is transformed, the estimated ``true figure''
follows the same transformation $g$.
An estimator is called \emph{similarity fitting equivariant}
if it is \emph{fitting equivariant\/} with respect to any
similarity transformation.

In the two-line fitting problem, denote
 by $\cup\{\ell_1, \ell_2\} = \ell_1 \cup \ell_2$
the union of a pair of two lines.
An estimator $\hatellell(Z)$ is
similarity fitting equivariant if and only if
for any similarity transformation $g(\mathbi{z})$,
\begin{equation}
\label{eq:deffevll} \cup \hatellell\bigl(g(Z)\bigr) = g\bigl(\cup \hatellell(Z)\bigr)
.
\end{equation}

The similarity fitting equivariant estimator depends on geometry of the plane
and does not depend on the Cartesian coordinate system used.

Because of \eqref{eq:fungellell}, in the two-line
fitting model, the estimator for two lines
$\hatellell(Z)$ is similarity equivariant
if and only if it is similarity fitting equivariant.

\subsection{Similarity equivariance of the five estimators}
Some troubles, which may arise during estimation, are not addressed yet.
\begin{itemize}
\item
The estimation may fail with small positive probability.
For example, the conic section estimated with the ALS2 estimator
is an ellipse with some positive probability,
and if it is, then the ``ignore-$\widehat{F}$'' estimator fails.
(If the estimator is consistent, then
 the failure probability tends to 0 as $n\to\infty$).
\item
The estimation may fail, for example, because the estimated
line should be parallel to the $y$-axis,
but the estimating procedure does not handle such case.
\item
The optimization problem may have multiple extremal points.
For the ALS2 estimator, it may occur that
$\dim\{ \beta : \varPsi_n(\hat\sigma^2) \beta = 0\} > 1$.
\end{itemize}

In order to define the equivariance of an unreliable estimator,
we allow that the estimators fail simultaneously
in both sides \xch{of}{of Eqs.}~\eqref{eq:defevll}, \eqref{eq:defevs},
or \eqref{eq:deffevll}.
Also, we allow that for fixed similarity transformation $g(\mathbi{z})$,
\xch{equation}{Eq.}~\eqref{eq:defevll}, \eqref{eq:defevs},
or \eqref{eq:deffevll} does not hold with probability 0.

The equivariance of the ALS2 estimator in the conic section fitting problem
is verified in \cite[Section 5.5]{Shklyar2007}
(see Theorem~30 there for similarity fitting equivariance).
That implies the equivariance of the ``ignore-$\widehat{F}$'' estimator.

In order to make the \emph{updated before ignore-$\widehat{F}$ step}
estimator equivariant,
we use normalization of the ALS2 estimator \eqref{eq:defbetatilde}
rather than $\|\hat{\mathbi{\beta}}\| = 1$.

The \emph{orthogonal regression\/} estimator and the
parametric \emph{maximum likelihood\/} estimator
are maximum likelihood estimators, but in different models.
Thus, they are equivariant.

The criterion function for the \emph{RBAN-moment\/} estimator
is similarity invariant.
This means that the criterion function does not change when
the data sample follows a similarity transformation and
the parameters follow the inducing transformation.
Thus, the RBAN-moment estimator is equivariant.

\subsection{An example of equivariant but
not fitting equivariant estimator}\label{ss:OddExample}
Consider a further restriction of the mixture-of-two-normal-distributions
model from Section~\ref{ss:PML}. Assume that covariance matrices
of $\varSigma_1$ and $\varSigma_2$ have the same diagonal entries
but additive inverse off-diagonal entries:
\[
\varSigma_1 = \begin{pmatrix}
\sigma_\xi^2 + \sigma^2 & -k \sigma_\xi^2 \\
-k \sigma_\xi^2 & k^2 \sigma_\xi^2 + \sigma^2
\end{pmatrix} \quad \mbox{and} \quad \varSigma_2
= \begin{pmatrix}
\sigma_\xi^2 + \sigma^2 & k \sigma_\xi^2 \\
k \sigma_\xi^2 & k^2 \sigma_\xi^2 + \sigma^2
\end{pmatrix} .
\]

The statistical structure is invariant in scaling of  the $y$-coordinate,
$(x_{\rm new}, y_{\rm new}) = (x_{\rm old},\: r y_{\rm old})$,
$r > 0$.
This transformation maps the lines $y = -k x + h_1$ and
$y = k x + h_2$ onto the lines
$y = -r k x + r h_1$ and $y = r k x + r h_2$, respectively.
The maximum likelihood estimator in this model is
equivariant. However, this equivariance is somewhat
strange.  The transformation of parameters that induces
the scaling of the  $y$-coordinate of the observed points
does not induce the same transformation of the true
points nor the same mapping of the true lines.
The estimated lines follow the transformation of
parameters rather than the transformation of
observed points. This is illustrated in Fig.~\ref{fig:f1}.

\begin{figure}[bt]
\includegraphics[scale=0.95]{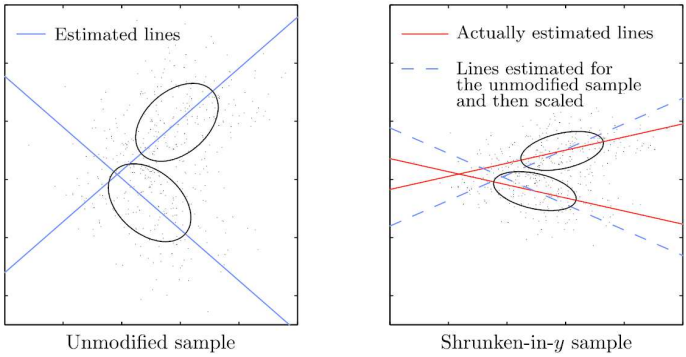}
\caption{Two samples, one of points $(x,y_{\rm old})$
(the unmodified sample) and one of points
$(x,y_{\rm new}) = (x, \frac{1}2 y_{\rm old})$
(the shrunken-in-$y$ sample),
are fitted with two lines
(the estimated lines are the solid lines
on the figures).
The estimated lines for the unmodified sample
(blue solid lines on the left figure)
when scaled with the same transformation as the
observed points are scaled
(blue dashed line on the right figure)
do not coincide with the actually
estimated lines for the shrunken-in-$y$ sample
(red solid lines on the right figure).
The ellipsoids are the 40\% ellipsoids
of the estimated normal distributions
(the compound distributions of the estimated
mixture \xch{distributions) (color figure online)}{distributions)}}\label{fig:f1}
%\vspace*{-3pt}
\end{figure}

Let $k_{\rm old}$ be the true value of the parameter
$k$ before the transformation.
Then after the transformation, the value of the parameter
is
\[
k_{\rm new} = \frac{t
+ \sqrt{
t^2 + 4 r^2 k^2_{\rm old}
\sigma^4_\xiold}}{
2 r k_{\rm old}^{} \sigma^2_\xiold}
\]
with $t = (r^2 k_{\rm old}^2 - 1)\, \sigma^2_\xiold
 + (r^2 - 1)\, \sigma^2_{\rm old}$.
If $0 < r^2 \neq 1$, $k_{\rm old} \neq 0$, and $\sigma^2 > 0$,
then $k_{\rm new} \neq r k_{\rm old}$.
Hence, the maximum likelihood estimator is not fitting equivariant
with respect to scaling of the $y$-coordinate here.

\vspace*{-3pt}\section{Simulations}\label{sec:simulations}\vspace*{-2pt}
\subsection{Simulation setup}\label{ss:SimulationSetup}
A sample of the true points $(\xi_i, \eta_i)$,
$i=1,\ldots,n$,
is generated from a random distribution concentrated on (a subset of) two lines.
Three distributions of the true points are used; see
Fig.~\ref{fig:3distrs}:\vadjust{\eject}
\begin{itemize}
\item a mixture of two singular normal distributions,\vspace*{-3pt}
\item a discrete distribution,\vspace*{-3pt}
\item a uniform distribution on two line segments.
\end{itemize}
\begin{figure}[t!]
\includegraphics[scale=0.95]{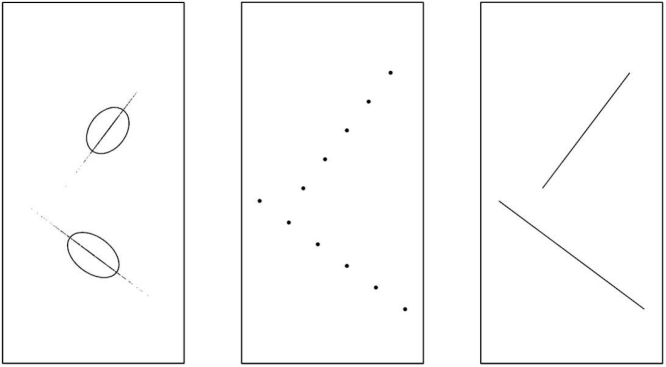}
\caption{Three distributions of the true points:
a mixture of two singular normal distributions,
a discrete distribution, and a uniform distribution
on two line segments. For the first case,
a~sample of 1000 points is plotted,
whereas for the second and third cases,
the support of the distribution of the true
points is plotted.
For the first case, the distribution of the
\emph{observed} points is a mixture of normal
distributions, and 40\% ellipsoids for its
components are plotted}\label{fig:3distrs}\vspace*{-9pt}
\end{figure}

These three distributions of true points are
concentrated on the same two lines
\[
4y=1-3x \quad \mbox{and} \quad 12y=16x+5,
\]
which intersect one another at the point $({-}0.08, 0.31)$.

For the same sample of true points
$\{(\xi_i, \eta_i),\allowbreak\; i=1,\ldots,n\}$,
100 samples of the measurement errors
$\{(\delta_i, \varepsilon_i),\allowbreak\; i=1,\ldots,n\}$,
$(\delta_i, \varepsilon_i)^\top \sim N(0,\sigma^2)$,
are simulated,
and 100 samples of the observed points $(x_i,y_i)$ are obtained;
see~\eqref{q11b1} and~\eqref{q11b2}.
For each sample of the observed points, the estimates
of the parameters of the true lines were evaluated
with the following five methods: two ALS2-based estimators
(the ignore-$\widehat F$ estimator and the estimator
with one-step update of the ALS2 estimator
before the ignore-$\widehat F$ step),
the orthogonal regression estimator,
the parametric maximum likelihood estimator,
and the RBAN moment estimator.

For each estimated couple of lines, the point of their intersection
is found. The 100 estimates of intersection points are averaged, and
their sample standard deviations are evaluated.
For the ALS2-based estimators and the RBAN moment estimator,
the standard errors of the estimators are also evaluated.

\subsection{Notes on computation of particular estimators}
For computation of the \emph{orthogonal regression\/} estimator,
the $k$-means method is used.
Initially, two lines were chosen randomly. Then \emph{classification\/}
and \emph{mean\/} steps are alternated.
On the classification step, the observed points
are split into two clusters based on which line is closer to the point.
(The first cluster contains all the observed points that are
closer to the first line than to the second line,
and the second cluster contains the other observed
points.)
On the means step, each cluster is fitted with
a straight line by the orthogonal regression method
(the two lines are updated).
The algorithm is completed when the classification
step does now change the clusters.
The obtained parameters of the two lines deliver a local minimum
to the criterion function $Q(k_1, h_1, k_2, h_2)$
\eqref{eq:ORQf}.
Trying to obtain the global minimum,
the algorithm is restarted several times
with different initial two lines.

For computation of the \emph{parametric maximum
likelihood estimator},
the ex\-pect\-ation--maximization algorithm
\cite{bilmes1998gentle} is used.
The equation for optimization problem of finding
a minimum of the likelihood function
$L(p, \mathbi{\mu}_1, \varSigma_1, \mathbi{\mu}_2,\varSigma_2)$
\eqref{eq:PMLLf} such that
$\lambda_{\min}(\varSigma_1) = \lambda_{\min}(\varSigma_2)$
is
\begin{gather*}
\sum_{i=1}^n \frac{\phi_{N(\mu_1,\varSigma_1)}(x_i, y_i) -
\phi_{N(\mu_2,\varSigma_2)}(x_i, y_i)}{
p \phi_{N(\mu_1,\varSigma_1)}(x_i, y_i) +
(1{-}p) \phi_{N(\mu_2,\varSigma_2)}(x_i, y_i)} = 0,
\\
\sum_{i=1}^n \frac{p \frac{\partial \phi_{N(\mu_1,\varSigma_1)}(x_i, y_i) }
{\partial \textit{par}} + (1{-}p)
\frac{\partial \phi_{N(\mu_2,\varSigma_2)}(x_i, y_i)}
{\partial \textit{par}}}{
p \phi_{N(\mu_1,\varSigma_1)}(x_i, y_i) +
(1{-}p) \phi_{N(\mu_2,\varSigma_2)}(x_i, y_i)} = 0,
\end{gather*}
where $\textit{par}$ is a vector parameterization of
$(\mathbi{\mu}_1,\varSigma_1,\mathbi{\mu}_2,\varSigma_2)$
such that
$\lambda_{\min}(\varSigma_1) = \lambda_{\min}(\varSigma_2)$.
Hence, the maximum likelihood estimator is a stationary point
of the function
\[
Q_{\bf w}(p, \mathrm{par}) = \sum_{i=1}^n
\bigl( w_i \ln \bigl(p \phi_{N(\mu_1,\varSigma_1)}(x_i,
y_i)\bigr) + (1{-}w_i) \ln \bigl((1{-}p)
\phi_{N(\mu_1,\varSigma_1)}(x_i, y_i)\bigr) \bigr)
\]
with fixed
\[
w_i = \frac {\hat p \phi_{N(\hat\mu_1,\widehat\varSigma_1)}(x_i, y_i)}{
\hat p \phi_{N(\hat\mu_1,\widehat\varSigma_1)}(x_i, y_i) +
(1{-}\hat p) \phi_{N(\hat\mu_2,\widehat\varSigma_2)}(x_i, y_i)} .
\]

The EM algorithm is iterative. Once the $(m-1)$th
approximation
$(p^{(m-1)},\allowbreak \mu_1^{(m-1)},
\allowbreak \varSigma_1^{(m-1)},
\mu_2^{(m-1)},\allowbreak \varSigma_2^{(m-1)})$ is obtained,
the weights are evaluated:
\[
w_i^{(m-1)} = \frac
{p^{(m-1)} \phi_{N(\mu_1^{(m-1)}, \varSigma_1^{(m-1)})} (x_i, y_i)}{
p^{(m-1)} \phi_{N(\mu_1^{(m-1)}, \varSigma_1^{(m-1)})} (x_i, y_i) +
(1{-}p^{(m-1)}) \phi_{N(\mu_2^{(m-1)}, \varSigma_2^{(m-1)})} (x_i, y_i)} .
\]
Then $m$th approximation
$(p^{(m)}, \mu_1^{(m)}, \varSigma_1^{(m)},
\mu_2^{(m)}, \varSigma_2^{(m)})$
is obtained by minimizing
\[
\sum_{i=1}^n \bigl( w_i^{(m-1)}
\ln\bigl(p \phi_{N(\mu_1, \varSigma_1)} (x_i, y_i)\bigr) +
\bigl(1{-}w_i^{(m-1)}\bigr) \ln \bigl( (1{-}p)
\phi_{N(\mu_2, \varSigma_2)} (x_i, y_i)\bigr) \bigr)
\]
under the constraint
$\lambda_{\min}(\varSigma_1) =  \lambda_{\min}(\varSigma_2)$.
The point where the minimum is attained can be explicitly
expressed in $w_i^{(m-1)}$, $x_i$, and $y_i$, $i=1,\ldots,n$.

\subsection{RBAN-moment estimator}\label{ss:RBAN-momentEval}
In case the criterion function
\begin{align*}
Q(\theta) &= Q\bigl(k_1, h_1, k_2,
h_2, \sigma^2\bigr)
\\
&= \min_{p \in \mathbb{R}} \min_{{\bf M} \in \mathbb{R}^{2\times 4}}
\bigl(f_2(\theta; p, {\bf M}) - \mathbi{\overline{m}}
\bigr)^\top \varSigma_m^{-1}
\bigl(f_2(\theta; p, {\bf M}) - \overline{\mbox{\boldmath$m$}}\bigr)
\end{align*}
has multiple minima,
a consistent estimator---that is, the ``ignore-$\widehat{F}$'' estimator---is used as the initial point,
and the criterion function $Q(\theta)$
is searched for a~local minimum nearby.
Here $\theta = (k_1, h_1, k_2, h_2, \sigma^2)^\top$
is a vector meaning the parameters of interest.

The knowledge or misspecification of the parameter $p$
does not affect the estimator for the parameters of interest
$k_1$,  \ldots, $\sigma^2$.
Thus, for estimation of the asymptotic covariance matrix,
assume $p=0.5$ to be known.
The estimator of the asymptotic covariance matrix of $(\theta, {\bf M})$
is
\[
\widehat\varSigma_{\theta,{\bf M}} = \bigl(f'_2(\hat
\theta;\, 0.5, \widehat{\bf M})^\top \varSigma_m^{-1}
f'_2(\hat\theta;\, 0.5, \widehat{\bf M})
\bigr)^{-1},
\]
where
\[
f'_2(\theta; 0.5, {\bf M}) =
\begin{pmatrix}
\frac{\partial f_2(\theta;\, 0.5, {\bf M})}{\partial \theta^\top}, &\hspace*{-9pt} \frac{\partial f_2(\theta;\, 0.5, {\bf M})}{\partial (\vectorize{\bf M})^\top}
\end{pmatrix}.
\]
The estimator of the asymptotic covariance matrix of $\theta$
is the principal submatrix of $\widehat\varSigma_{\theta,{\bf M}}$.

\subsection{Simulation results}\label{ss:simresults}
Average of estimated centers over 100 simulations,
standard deviations over 100 simulations,
and medians of estimated standard errors
are presented in Tables~\ref{table1-Normal}--\ref{table1-Uniform}.
%%%%%%%{tables1.tex}
\begin{table}[t]
\def\arraystretch{1.12}
\caption{Means, standard deviations, and median standard errors
of the estimates of intersection points
for true points having mixture of singular normal distributions}
\label{table1-Normal}\vspace*{-9pt}
\begin{tabular*}{\textwidth}
{@{\extracolsep{\fill}}l@{\qquad\quad}D{.}{.}{4}D{.}{.}{4}%
     @{\qquad\quad}D{.}{.}{4}D{.}{.}{4}%
     @{\qquad\quad}D{.}{.}{4}D{.}{.}{4}@{}}
\hline
\multicolumn{1}{@{}c@{\qquad\quad}}{Method}&
\multicolumn{2}{@{}c@{\qquad\quad}}{Means}&
\multicolumn{2}{@{}c@{\qquad\quad}}{\hbox to 0pt{\hss Standard~deviations\hss}}&
\multicolumn{2}{@{}c@{}}{Standard errors}\\
\hline
True value& -0.08 & 0.31 & & & & \\\hline
\multicolumn{6}{l}{$n=1000, \qquad \sigma=0.1 \quad(\sigma^2 = 0.01)$}\\
\IgnoreF &  -0.1098 & 0.2753 &  0.8611 & 0.3866 &  0.1918 & 0.1125 \\
Update   &  -0.0820 & 0.2912 &  0.0706 & 0.0620 &  0.0437 & 0.0479 \\
OR       &   0.6533 & 3.4524 &  0.0877 & 0.6783 &         &        \\
ML       &  -0.0795 & 0.3077 &  0.0326 & 0.0269 \\
RBAN     &   0.0647 & 0.3759 &  0.3563 & 0.2606 &  0.0350 & 0.0438 \\
\hline
\multicolumn{6}{l}{$n=10000, \qquad \sigma=0.1 \quad(\sigma^2 = 0.01)$}\\
\IgnoreF &  -0.0909 & 0.3052 &  0.0646 & 0.0308 &  0.0601 & 0.0303 \\
Update   &  -0.0796 & 0.3080 &  0.0127 & 0.0156 &  0.0124 & 0.0155 \\
OR       &   0.5492 & 3.1488 &  0.0175 & 0.1701 &         &        \\
ML       &  -0.0776 & 0.3083 &  0.0103 & 0.0088 &         &        \\
RBAN     &  -0.0789 & 0.3100 &  0.0126 & 0.0154 &  0.0127 & 0.0154 \\
\hline
\multicolumn{6}{l}{$n=100000, \qquad \sigma=0.1$}\\
\IgnoreF &  -0.0799 & 0.3101 &  0.0211 & 0.0095 &  0.0188 & 0.0093 \\
Update   &  -0.0801 & 0.3098 &  0.0037 & 0.0042 &  0.0039 & 0.0047 \\
OR       &   0.5606 & 3.2041 &  0.0063 & 0.0484 &         &        \\
ML       &  -0.0801 & 0.3101 &  0.0030 & 0.0025 &         &        \\
RBAN     &  -0.0801 & 0.3101 &  0.0038 & 0.0042 &  0.0039 & 0.0048 \\
\hline
\multicolumn{6}{l}{$n=1000, \qquad \sigma=0.02$}\\
\IgnoreF &  -0.0799 & 0.3099 &  0.0151 & 0.0075 &  0.0147 & 0.0072 \\
Update   &  -0.0795 & 0.3097 &  0.0052 & 0.0051 &  0.0052 & 0.0049 \\
OR       &  -0.0792 & 0.3092 &  0.0050 & 0.0044 &         &        \\
ML       &  -0.0794 & 0.3093 &  0.0048 & 0.0043 &         &        \\
RBAN     &  -0.0797 & 0.3098 &  0.0063 & 0.0057 &  0.0052 & 0.0049 \\
\hline
\end{tabular*}\vspace*{-6pt}
\end{table}

\begin{table}[t]
\def\arraystretch{1.12}
\caption{Means, standard deviations, and median standard errors
of the estimates of intersection points
for discrete distribution of the true points}
\label{table1-Discrete}\vspace*{-9pt}
\begin{tabular*}{\textwidth}
{@{\extracolsep{\fill}}l@{\qquad\quad}D{.}{.}{4}D{.}{.}{4}%
     @{\qquad\quad}D{.}{.}{4}D{.}{.}{4}%
     @{\qquad\quad}D{.}{.}{4}D{.}{.}{4}@{}}
\hline
\multicolumn{1}{@{}c@{\qquad\quad}}{Method}&
\multicolumn{2}{@{}c@{\qquad\quad}}{Means}&
\multicolumn{2}{@{}c@{\qquad\quad}}{\hbox to 0pt{\hss Standard~deviations\hss}}&
\multicolumn{2}{@{}c@{}}{Standard errors}\\
\hline
True value& -0.08 & 0.31 & & & & \\\hline
\multicolumn{6}{l}{$n=1000, \qquad \sigma=0.1$}\\
\IgnoreF &  -0.0699 & 0.3077 &  0.0263 & 0.0290 &  0.0241 & 0.0263 \\
Update   &  -0.0722 & 0.3116 &  0.0197 & 0.0186 &  0.0203 & 0.0175 \\
OR       &  -0.0755 & 0.3188 &  0.0148 & 0.0144 &         &        \\
ML       &  -0.0958 & 0.3315 &  0.0131 & 0.0120 &         &        \\
RBAN     &  -0.0717 & 0.3105 &  0.0209 & 0.0175 &  0.0205 & 0.0178 \\
\hline
\multicolumn{6}{l}{$n=10000, \qquad \sigma=0.1$}\\
\IgnoreF &  -0.0783 & 0.3109 &  0.0092 & 0.0078 &  0.0080 & 0.0083 \\
Update   &  -0.0785 & 0.3114 &  0.0071 & 0.0054 &  0.0065 & 0.0061 \\
OR       &  -0.0721 & 0.3157 &  0.0048 & 0.0046 &         &        \\
ML       &  -0.0931 & 0.3278 &  0.0043 & 0.0035 &         &        \\
RBAN     &  -0.0786 & 0.3113 &  0.0071 & 0.0053 &  0.0065 & 0.0061 \\
\hline
\multicolumn{6}{l}{$n=100000, \qquad \sigma=0.1$}\\
\IgnoreF &  -0.0798 & 0.3098 &  0.0031 & 0.0024 &  0.0026 & 0.0027 \\
Update   &  -0.0799 & 0.3099 &  0.0025 & 0.0016 &  0.0021 & 0.0019 \\
OR       &  -0.0715 & 0.3151 &  0.0017 & 0.0013 &         &        \\
ML       &  -0.0932 & 0.3283 &  0.0013 & 0.0011 &         &        \\
RBAN     &  -0.0799 & 0.3099 &  0.0024 & 0.0017 &  0.0021 & 0.0019 \\
\hline
\multicolumn{6}{l}{$n=1000, \qquad \sigma=0.02$}\\
\IgnoreF &  -0.0796 & 0.3094 &  0.0033 & 0.0032 &  0.0036 & 0.0033 \\
Update   &  -0.0798 & 0.3097 &  0.0030 & 0.0024 &  0.0033 & 0.0023 \\
OR       &  -0.0782 & 0.3086 &  0.0021 & 0.0018 &         &        \\
ML       &  -0.0786 & 0.3087 &  0.0019 & 0.0018 &         &        \\
RBAN     &  -0.0796 & 0.3092 &  0.0030 & 0.0028 &  0.0033 & 0.0024 \\
\hline
\end{tabular*}
\end{table}

\begin{table}[t]
\def\arraystretch{1.12}
\caption{Means, standard deviations, and median standard errors
of the estimates of intersection points
for uniform distribution of the true points on two line segments}
\label{table1-Uniform}\vspace*{-9pt}
\begin{tabular*}{\textwidth}
{@{\extracolsep{\fill}}l@{\qquad\quad}D{.}{.}{4}D{.}{.}{4}%
     @{\qquad\quad}D{.}{.}{4}D{.}{.}{4}%
     @{\qquad\quad}D{.}{.}{4}D{.}{.}{4}@{}}
\hline
\multicolumn{1}{@{}c@{\qquad\quad}}{Method}&
\multicolumn{2}{@{}c@{\qquad\quad}}{Means}&
\multicolumn{2}{@{}c@{\qquad\quad}}{\hbox to 0pt{\hss Standard~deviations\hss}}&
\multicolumn{2}{@{}c@{}}{Standard errors}\\
\hline
True value& -0.08 & 0.31 & & & & \\\hline
\multicolumn{6}{l}{$n=1000, \qquad \sigma=0.1$}\\
\IgnoreF &  -0.0785 & 0.3122 &  0.0363 & 0.0274 &  0.0318 & 0.0301 \\
Update   &  -0.0794 & 0.3140 &  0.0216 & 0.0258 &  0.0205 & 0.0290 \\
OR       &  -0.0616 & 0.3127 &  0.0185 & 0.0167 &         &        \\
ML       &  -0.0934 & 0.3118 &  0.0116 & 0.0111 &         &        \\
RBAN     &  -0.0807 & 0.3126 &  0.0219 & 0.0293 &  0.0193 & 0.0292 \\
\hline
\multicolumn{6}{l}{$n=10000, \qquad \sigma=0.1$}\\
\IgnoreF &  -0.0796 & 0.3107 &  0.0103 & 0.0103 &  0.0099 & 0.0095 \\
Update   &  -0.0796 & 0.3110 &  0.0067 & 0.0103 &  0.0065 & 0.0094 \\
OR       &  -0.0639 & 0.3087 &  0.0064 & 0.0049 &         &        \\
ML       &  -0.0904 & 0.3106 &  0.0042 & 0.0033 &         &        \\
RBAN     &  -0.0797 & 0.3107 &  0.0066 & 0.0104 &  0.0064 & 0.0095 \\
\hline
\multicolumn{6}{l}{$n=100000, \qquad \sigma=0.1$}\\
\IgnoreF &  -0.0798 & 0.3098 &  0.0035 & 0.0030 &  0.0032 & 0.0030 \\
Update   &  -0.0798 & 0.3098 &  0.0021 & 0.0029 &  0.0020 & 0.0030 \\
OR       &  -0.0625 & 0.3085 &  0.0015 & 0.0014 &         &        \\
ML       &  -0.0891 & 0.3107 &  0.0012 & 0.0011 &         &        \\
RBAN     &  -0.0796 & 0.3097 &  0.0023 & 0.0030 &  0.0020 & 0.0030 \\
\hline
\multicolumn{6}{l}{$n=1000, \qquad \sigma=0.02$}\\
\IgnoreF &  -0.0799 & 0.3100 &  0.0041 & 0.0032 &  0.0041 & 0.0035 \\
Update   &  -0.0798 & 0.3101 &  0.0033 & 0.0032 &  0.0032 & 0.0034 \\
OR       &  -0.0803 & 0.3103 &  0.0023 & 0.0021 &         &        \\
ML       &  -0.0805 & 0.3101 &  0.0022 & 0.0020 &         &        \\
RBAN     &  -0.0798 & 0.3100 &  0.0035 & 0.0033 &  0.0032 & 0.0034 \\
\hline
\end{tabular*}
\end{table}
%%%%%%%{tables1.tex}

Using the estimator $\widehat{F}$
(by \emph{one-step update\/} before ignore-$\widehat{F}$ step), we
improve the precision of estimation.
The precision of the \emph{RBAN-moment} estimator approximates
the precision of the \emph{updated before ignore-$\widehat{F}$ step\/} estimator.

The parametric \emph{maximum likelihood} estimator is the best
when the normality condition, which was assumed during
construction of the estimator, is satisfied. Otherwise, it is
biased.

The \emph{orthogonal regression\/} and the \emph{maximum likelihood\/} estimators
are good for small error variance ($\sigma^2 = 0.02^2$).
For $\sigma^2 = 0.1^2$, the \emph{orthogonal regression\/} estimator is broken down
when the distribution of true points is a~mixture of two normal distributions
and is biased for the two other distributions of true
points.\looseness=1

Mean-square deviance of the intersection
of the estimated lines from the true
intersection point is presented in Table~\ref{table-fullDeviances}.
%%%%%%%{table2.tex}
\begin{table}[t!]
\def\arraystretch{1.12}
\caption{Mean-square distances between estimated and true intersection points}
\label{table-fullDeviances}\vspace*{-9pt}
\begin{tabular*}{\textwidth}
{@{\extracolsep{\fill}}rD{.}{.}{2}%
 D{.}{.}{4}%
 D{.}{.}{4}%
 D{.}{.}{4}%
 D{.}{.}{4}%
 D{.}{.}{4}@{}}
\hline
\multicolumn{1}{c}{$n$}&
\multicolumn{1}{c}{$\sigma$}&
\multicolumn{1}{c}{\IgnoreF}&
\multicolumn{1}{c}{Update}&
\multicolumn{1}{c}{OR}&
\multicolumn{1}{c}{ML}&
\multicolumn{1}{c@{}}{RBAN}\\
\hline
\multicolumn{7}{l}{\emph{Distribution of true points is a mixture of normals}}\\
  1000 & 0.1 & 0.9403 & 0.0954 & 3.2978 & 0.0421 & 0.6124\\ % 0.5743
 10000 & 0.1 & 0.0722 & 0.0201 & 2.9127 & 0.0138 & 0.0199\\ % 0.0256
100000 & 0.1 & 0.0230 & 0.0056 & 2.9645 & 0.0038 & 0.0056\\ % 0.0058
1000 & 0.02  & 0.0168 & 0.0073 & 0.0067 & 0.0065 & 0.0084\\ % 0.0073
\hline
\multicolumn{7}{l}{\emph{Discrete distribution of true points}}\\
  1000 & 0.1 & 0.0403 & 0.0281 & 0.0228 & 0.0320 & 0.0284\\ % 0.0287
 10000 & 0.1 & 0.0121 & 0.0091 & 0.0118 & 0.0228 & 0.0090\\ % 0.0091
100000 & 0.1 & 0.0039 & 0.0029 & 0.0101 & 0.0226 & 0.0029\\ % 0.0030
1000 & 0.02  & 0.0046 & 0.0038 & 0.0036 & 0.0032 & 0.0042\\ % 0.0038
\hline
\multicolumn{7}{l}{\emph{Uniform distribution of true points}}\\
  1000 & 0.1 & 0.0453 & 0.0367 & 0.0310 & 0.0209 & 0.0365\\ % 0.0345
 10000 & 0.1 & 0.0145 & 0.0123 & 0.0181 & 0.0117 & 0.0123\\ % 0.0123
100000 & 0.1 & 0.0046 & 0.0036 & 0.0177 & 0.0093 & 0.0037\\ % 0.0036
1000 & 0.02  & 0.0052 & 0.0046 & 0.0031 & 0.0030 & 0.0048\\ % 0.0048
\hline
\end{tabular*}
\end{table}
%%%%%%%{table2.tex}

For small errors, the \emph{RBAN-moment} estimator is a bit less
accurate than the \emph{updated before ignore-$\widehat{F}$
step\/} estimator. For $\sigma^2 = 0.1^2$, the difference is
negligible.

The parametric \emph{maximum likelihood} estimator has the smallest
deviation from the true value, except for the discrete
distribution of true points and $\sigma^2 = 0.01$.

For small errors ($\sigma^2 = 0.02^2$), the \emph{orthogonal
regression\/} estimator outperforms the consistent estimators and
has the deviation approximately as small as the parametric
\emph{maximum likelihood\/} estimator.

Normalization of the estimator of $\boldbeta$
affects the ALS2-based estimator of two lines
with \emph{one-step update before the ignore-$\widehat{F}$ step}.
With normalization $\|\hat{\mathbi{\beta}}\|=1$,
the derived estimator of two lines is not equivariant,
whereas with normalization
$\tilde{\mathbi{\beta}}^\top {\boldsymbol\varPsi}'_n(\hat\sigma^2)
 \tilde{\mathbi{\beta}} = -n$,
the derived estimator is equivariant.
Comparison of equivariant and nonequivariant
versions of the estimator is displayed in
Table~\ref{table-UpdateVers}.
%%%%%%%%{tablesU.tex}
\begin{table}[p]
\def\arraystretch{1.03}
\caption{Comparison of two versions
(equivariant (ev) and nonequivariant (ne))
of the updated before ignore-$\widehat{F}$ step estimator}
\label{table-UpdateVers}\vspace*{-9pt}
\begin{tabular*}{\textwidth}
{@{\extracolsep{\fill}}rD{.}{.}{2}c%
 D{.}{.}{6}D{.}{.}{6}%
 D{.}{.}{6}D{.}{.}{6}%
 D{.}{.}{6}D{.}{.}{6}@{}}
\hline
\multicolumn{1}{@{}c}{$n$}&
\multicolumn{1}{c}{$\sigma$}&
Ver.&
\multicolumn{2}{c}{Means}&
\multicolumn{2}{c}{Standard deviations}&
\multicolumn{2}{c@{}}{Standard errors}\\
\hline
\multicolumn{3}{l}{True value:}& -0.08 & 0.31 & & & & \\\hline
\multicolumn{9}{l}{\emph{Distribution of true points is a mixture of normals}}\\
  1000 & 0.1 & ev & -0.082046 & 0.291175 & 0.070617 & 0.062003 & 0.043713 & 0.047939\\
       &     & ne & -0.044038 & 0.247382 & 0.251372 & 0.514473 & 0.038853 & 0.050167\\
 10000 & 0.1 & ev & -0.079623 & 0.308039 & 0.012652 & 0.015582 & 0.012403 & 0.015471\\
       &     & ne & -0.085055 & 0.304177 & 0.015924 & 0.018695 & 0.012506 & 0.015550\\
100000 & 0.1 & ev & -0.080137 & 0.309780 & 0.003710 & 0.004173 & 0.003925 & 0.004749\\
       &     & ne & -0.080991 & 0.309386 & 0.003880 & 0.004255 & 0.003926 & 0.004742\\
1000 & 0.02  & ev & -0.079548 & 0.309703 & 0.005156 & 0.005131 & 0.005174 & 0.004891\\
       &     & ne & -0.079918 & 0.309508 & 0.005247 & 0.005149 & 0.005179 & 0.004918\\
\hline
\multicolumn{9}{l}{\emph{Discrete distribution of true points}}\\
  1000 & 0.1 & ev & -0.072202 & 0.311648 & 0.019709 & 0.018553 & 0.020266 & 0.017500\\
       &     & ne & -0.071460 & 0.312230 & 0.020049 & 0.018740 & 0.020213 & 0.017457\\
 10000 & 0.1 & ev & -0.078482 & 0.311377 & 0.007087 & 0.005371 & 0.006518 & 0.006066\\
       &     & ne & -0.078418 & 0.311436 & 0.007090 & 0.005387 & 0.006520 & 0.006054\\
100000 & 0.1 & ev & -0.079868 & 0.309929 & 0.002460 & 0.001647 & 0.002060 & 0.001900\\
       &     & ne & -0.079863 & 0.309934 & 0.002461 & 0.001647 & 0.002060 & 0.001901\\
1000 & 0.02  & ev & -0.079772 & 0.309728 & 0.002967 & 0.002376 & 0.003320 & 0.002344\\
     &       & ne & -0.079755 & 0.309732 & 0.002963 & 0.002375 & 0.003319 & 0.002339\\
\hline
\multicolumn{9}{l}{\emph{Uniform distribution of true points}}\\
  1000 & 0.1 & ev & -0.079405 & 0.313977 & 0.021551 & 0.025759 & 0.020451 & 0.028992\\
       &     & ne & -0.078507 & 0.315350 & 0.022219 & 0.026091 & 0.020512 & 0.029115\\
 10000 & 0.1 & ev & -0.079604 & 0.311024 & 0.006673 & 0.010337 & 0.006467 & 0.009389\\
       &     & ne & -0.079576 & 0.311176 & 0.006685 & 0.010349 & 0.006456 & 0.009372\\
100000 & 0.1 & ev & -0.079795 & 0.309802 & 0.002075 & 0.002919 & 0.001974 & 0.002994\\
       &     & ne & -0.079794 & 0.309818 & 0.002076 & 0.002921 & 0.001972 & 0.002992\\
1000 & 0.02  & ev & -0.079833 & 0.310081 & 0.003252 & 0.003249 & 0.003172 & 0.003418\\
       &     & ne & -0.079825 & 0.310097 & 0.003250 & 0.003249 & 0.003169 & 0.003411\\
\hline
\end{tabular*}
\end{table}
%%%%%%%%{tablesU.tex}

%%%%%%%{Table2ACM.tex}
\begin{table}[p]
\def\arraystretch{1.03}
\tabcolsep=0pt
\caption{Coverage probability and area
of confidence ellipsoids (c.e.) for centers
by the ALS2 estimator}\label{tab:table2ACM}\vspace*{-9pt}
\begin{tabular*}{\textwidth}
{@{\extracolsep{\fill}}rlcccccc@{}}
\hline
\multicolumn{1}{c}{$n$} &
\multicolumn{1}{c}{$\sigma$} &
\multicolumn{3}{c}{$\widehat{\varSigma}_{\rm true}$-based estimator} &
\multicolumn{3}{c}{$\widehat{\varSigma}_{\rm sample}$-based estimator} \\
\cline{3-5}\cline{6-8}
&& \multicolumn{2}{c}{Coverage probab.}
& \multicolumn{1}{l}{Area of}
& \multicolumn{2}{c}{Coverage probab.}
& \multicolumn{1}{l}{Area of} \\
\cline{3-4}\cline{6-7}
&&
\multicolumn{1}{c}{80\%,} &
\multicolumn{1}{c}{95\%,} &
\multicolumn{1}{l}{95\% c.e.,} &
\multicolumn{1}{c}{80\%,} &
\multicolumn{1}{c}{95\%,} &
\multicolumn{1}{l}{95\% c.e.,} \\
&&
\multicolumn{1}{c}{\%} &
\multicolumn{1}{c}{\%} &
\multicolumn{1}{l}{$\times 10^{-4}$} &
\multicolumn{1}{c}{\%} &
\multicolumn{1}{c}{\%} &
\multicolumn{1}{l}{$\times 10^{-4}$} \\
\hline
\multicolumn{8}{l}
{\emph{Distribution of true points is a mixture of normals}} \\
  1000 & 0.1 & 70.6 & 80.2 & 1449. & 70.0 & 79.2 & 1562. \\
 10000 & 0.1 & 79.4 & 93.8 & 236.2 & 79.6 & 92.9 & 259.4 \\
100000 & 0.1 & 80.7 & 94.9 & 15.38 & 80.6 & 94.9 & 15.17 \\
1000 &  0.02 & 80.4 & 95.1 & 15.95 & 78.0 & 94.1 & 19.86 \\
\hline
\multicolumn{8}{l}{\emph{Discrete distribution of true points}} \\
  1000 & 0.1 & 78.1 & 93.9 & 81.80  & 77.4 & 93.4 & 78.94  \\
 10000 & 0.1 & 81.1 & 95.6 & 12.34  & 80.9 & 95.8 & 12.39  \\
100000 & 0.1 & 80.1 & 94.6 &  1.205 & 79.9 & 94.7 &  1.204 \\
1000 &  0.02 & 81.0 & 94.9 &  1.984 & 81.3 & 95.2 &  1.988 \\
\hline
\multicolumn{8}{l}{\emph{Uniform distribution of true points}} \\
  1000 & 0.1 & 82.1 & 94.3 & 152.9 & 81.5 & 94.3 & 138.4 \\
 10000 & 0.1 & 81.0 & 96.6 & 20.36 & 80.3 & 96.3 & 18.92 \\
100000 & 0.1 & 78.4 & 95.0 & 1.823 & 78.6 & 95.0 & 1.842 \\
1000 &  0.02 & 78.7 & 94.7 & 2.926 & 78.5 & 94.1 & 3.041 \\
\hline
\end{tabular*}
\end{table}
%%%%%%%{Table2ACM.tex}

There is a tendency that the equivariant version of the estimator
is more accurate for small samples than
the nonequivariant version.
The two versions of the estimator are
consistent and asymptotically equivalent.
When the estimation is precise, the difference between the versions
is negligible.
When the estimation is imprecise, it is impossible to make inference which
version is more accurate.

\subsection{Comparison of two estimators for asymptotic
covariance matrix in the conic section fitting
model}

In \cite{Shklyar2015p2} a conic section fitting model is
considered, and two estimators
($\widehat\varSigma_{\rm true}$ and
$\widehat\varSigma_{\rm sample}$)  for the
asymptotic covariance matrix of the ALS2 estimator
are constructed.

The software developed here can be used to make numerical
comparison of the estimates of the asymptotic covariance
matrices.
The data are generated as described in
Section~\ref{ss:SimulationSetup}
with 1000 simulations for each set of true points.
Thus, the true conic unnecessarily was chosen
degenerate.
For each simulation, the parameters of the conic
section were estimated; its center is found,
and two confidence ellipsoids for the center were
constructed using two different estimators of the
asymptotic covariance matrix.

The sample coverage probability and median (over
1000 ellipsoids) area of the confidence ellipsoids
is presented in Table~\ref{tab:table2ACM}.  The ellipsoids were
constructed for confidence levels 0.8 and 0.95.
The area of 95\% confidence ellipsoids is displayed
in Table~\ref{tab:table2ACM},
and the area of 80\% confidence ellipsoid is
$\log_{20}(5) = 0.5372$
of the area of 95\% confidence ellipsoids.

Note that standard errors for coverage probability
are $1.3\%$ for 80\% confidence ellipsoids and
$0.7\%$ for 95\% confidence ellipsoids.
The simulations do not allow us to make an inference which
estimator is better.
Thus, $\widehat{\varSigma}_{\rm sample}$-based estimator
\emph{updated before ignore-$\widehat{F}$ step\/}
is compared with other estimators in simulations
in Section~\ref{ss:simresults}
because of simpler explicit expression
for $\widehat{\varSigma}_{\rm sample}$.

%% Appendices %%
%%%%%%%%%%%%%%%%%%%%%%
\appendix
\section{Proofs}
\label{Appendix}

\begin{proof}[Proof of Proposition~\ref{prop:corec1}.]
The strong consistency of the estimator follows from \cite[Theorem 17]{Shklyar2007}.
Under the conditions of Proposition~\ref{prop:corec1},
\begin{gather}
\label{eq:convEP1} \frac{1}n {\boldsymbol\varPsi}_n\bigl(
\sigma^2\bigr) \to \Psiinf \quad \mbox{a.s.}
\\
\label{eq:convEP2} \frac{1}n \overline{\boldsymbol\varPsi}_n \to
\Psiinf \quad \mbox{(a.s. in the structural model).}
\end{gather}
By Lemma 5 in \cite{Shklyar2007},
$\limsup \frac{1}n \boldbeta^\top \overline{\boldsymbol\varPsi}'_n  \boldbeta < 0$,
which, together with \eqref{eq:convEP2}, implies $\boldbeta^\top \dPsiinf \boldbeta < 0$
(see the proof of Theorem 2 in \cite{Shklyar2015p1}).
Then
\begin{equation}
\label{eq:convergdenomQ} \frac{\hatboldbeta^\top {\boldsymbol\varPsi}'_n(\hat\sigma^2) \hatboldbeta}{
n\,\|\hatboldbeta\|^2} = \frac{\hatboldbeta^\top {\boldsymbol\varPsi}'_n(\sigma^2) \hatboldbeta}{
n\,\|\hatboldbeta\|^2} + \frac{(\hat\sigma^2 - \sigma^2)
\hatboldbeta^\top \ddPsi \hatboldbeta}{
\|\hatboldbeta\|^2} \to
\frac{\boldbeta^\top \dPsiinf \boldbeta}{
\|\boldbeta\|^2} \quad\mbox{a.s.}
\end{equation}
Eventually, the left-hand side of \eqref{eq:convergdenomQ} in negative.
\end{proof}

\begin{proof}[Proof of Proposition~\ref{prop:propALS2}]
The strong consistency of $\widetilde{\boldbeta}$ follows from
%convergences
\eqref{eq:consHatbeta1} and \eqref{eq:convergdenomQ}.
The proof of asymptotic normality and consistency of the estimator
of the asymptotic covariance matrix can be obtained
by modification of the proofs of Theorem 2 in \cite{Shklyar2015p1}
and Theorem 3 in \cite{Shklyar2015p2}.
\end{proof}

\begin{proof}[Proof of Proposition~\ref{prop:2lcALS2f}]
The conditions of consistency Theorem~1 in \cite{Shklyar2015p1}
can be verified, and the consistency follows.

The most tedious is the condition
\begin{equation}
\liminf_{n\to\infty} \frac{1}n \lambda_{\min,2} (
\overline \varPsi_n) > 0 . \label{eq:scondiTh1p1}
\end{equation}

Denote
\[
K_j = \begin{pmatrix}
0 & 0 & 1 \\
0 & h_j & k_j \\
0 & 1 & 0 \\
h_j & k_j & 0 \\
1 & 0 & 0
\end{pmatrix}, \quad j=1, \, 2.
\]
Then $K_1^{} K_1^\top + K_2^{} K_2^\top$ is a positive semidefinite
matrix, and
\[
\det \bigl(K_1^{} K_1^\top +
K_2^{} K_2^\top\bigr) = 2
(h_1 - h_2)^4 + 2(h_1 -
h_2)^2 (k_1 - k_2)^2
+ 2 (k_1 - k_2)^4 > 0 .
\]\eject
\noindent Thus, $\lambda_{\min} (K_1^{} K_1^\top + K_2^{} K_2^\top) > 0$.

The matrix
\[
K_1^{} \sum_{\substack{i=1,\ldots,n\\ \nu(i)=1}} \begin{pmatrix}
1        & \xi_i^{} & \xi_i^2 \\
\xi_i^{} & \xi_i^2  & \xi_i^3 \\
\xi_i^2  & \xi_i^3  & \xi_i^4
\end{pmatrix}
K_1^\top + K_2 \sum
_{\substack{i=1,\ldots,n\\ \nu(i)=2}} \begin{pmatrix}
1        & \xi_i^{} & \xi_i^2 \\
\xi_i^{} & \xi_i^2  & \xi_i^3 \\
\xi_i^2  & \xi_i^3  & \xi_i^4
\end{pmatrix} K_2^\top
\]
is the principal submatrix of $\overline{\varPsi}_n$.
By the Cauchy interlacing theorem,
\begin{align*}
\lambda_{\min,2} (\overline{\varPsi}_n) &\ge
\lambda_{\min} \left( \sum_{j=1}^2
K_j^{} \sum_{\substack{i=1,\ldots,n\\ \nu(i)=j}} \begin{pmatrix}
1        & \xi_i^{} & \xi_i^2 \\
\xi_i^{} & \xi_i^2  & \xi_i^3 \\
\xi_i^2  & \xi_i^3  & \xi_i^4
\end{pmatrix}
K_j^\top \right)
\\
&\ge \lambda_{\min} \bigl(K_1^{}
K_1^\top + K_2^{}
K_2^\top\bigr) \min_{j=1,\, 2} \left(
\lambda_{\min} \left( \sum_{\substack{i=1,\ldots,n\\ \nu(i)=j}} \begin{pmatrix}
1        & \xi_i^{} & \xi_i^2 \\
\xi_i^{} & \xi_i^2  & \xi_i^3 \\
\xi_i^2  & \xi_i^3  & \xi_i^4
\end{pmatrix}
\right) \right) ,
\end{align*}
and then inequality \eqref{eq:scondiTh1p1} can easily be proved.
\end{proof}

\begin{proof}[Proof of Proposition~\ref{prop:2lcALS2s}.]
Proposition~\ref{prop:2lcALS2s} follows from Proposition 25 in
\cite{Shklyar2007}.
The identifiability condition (S5-) in \cite{Shklyar2007} holds
because the intersection of a~couple of lines and a conic section
may be a finite set with not more than four points,
a straight line,
a straight line and a point outside the line,
or the couple of lines and the conic section coincide;
in the last case, the coefficients of the equations
for the lines and the conic section satisfy relations
\eqref{eqs:2linesTOconic}.
\end{proof}

\begin{proof}[Proofs of Propositions~\ref{prop:ignF_func_cons},
\ref{prop:ignF_struct_cons},
\ref{prop:ignF_func_AN}, and
\ref{prop:ignF_struct_AN}.]
Consistence of the ``ignore-$\widehat{F}$'' estimator
follows from the consistency of
the ALS2 estimator $\hat{\mathbi{\beta}}$ and
from the continuity of the function $\betatolines(\mathbi{\beta})$ at the point
of the true value of the parameter $\mathbi{\beta}$.
The asymptotic normality of the ``ignore-$\widehat{F}$''
follows from the asymptotic normality of
$\tilde{\mathbi{\beta}}$ and
the differentiability of
$\betatolines(\mathbi{\beta})$ at the point
$\betanormaltrue$.
\end{proof}

\begin{proof}[Proof of Proposition~\ref{prop:can_1st}.]
The consistency and asymptotic normality of
the $\tilde{\mathbi{\beta}}$ estimator,
the differentiability of the functional $\Delta(\mathbi{\beta})$
at point $\betanormaltrue$ (note that $\Delta(\betanormaltrue) = 0$),
and the convergence
\[
\frac{1}{\Delta'(\tilde{\mathbi{\beta}}) \widehat{\varSigma}_{\tilde\beta}
\Delta'(\tilde{\mathbi{\beta}})^\top} \widehat{\varSigma}_{\tilde\beta} \Delta'(
\tilde{\mathbi{\beta}})^\top \to \frac{1}{\Delta'(\betanormaltrue) \varSigma_{\tilde\beta}
\Delta'(\betanormaltrue)^\top} \varSigma_{\tilde\beta}
\Delta'(\betanormaltrue)^\top
\]
imply the convergence and asymptotic normality of
the updated estimator~$\tilde{\mathbi{\beta}}_{\rm 1st}$.
Thus, the consistency and asymptotic normality
of $(\hat k_{1,{\rm 1st}},
 \hat h_{1,{\rm 1st}},
 \hat k_{2,{\rm 1st}},
 \hat h_{2,{\rm 1st}})^\top$ can be\break proved
similarly to those of the ``ignore-$\widehat{F}$'' estimator.
\end{proof}

%% Bibliography %%
%%%%%%%%%%%%%%%%%%%%%%
%\bibliography{bib/2lines}

%\input{tablesU.tex}
\end{document}